\newcounter{thmc}
\newtheorem{theorem}[thmc]{Theorem}
\newtheorem{corollary}[thmc]{Corollary}
\newtheorem{lemma}[thmc]{Lemma}
\global\long\def\trace{\operatorname{Tr}}
\global\long\def\diag{\operatorname{diag}}
\global\long\def\ketbra#1#2{\ket{#1}\!\bra{#2}}
\global\long\def\one{\mathds{1}}
\newcommand{\kommentar}[1]{}
\newcommand{\R}{\mathcal{R}_\mathcal{A}}
\newcommand{\Rt}{\tilde{\mathcal{R}}_\mathcal{A}}
\NewDocumentCommand\opti{smmm>{\SplitList{;}}m} {
\begingroup%
\setlength{\belowdisplayskip}{-0.6\baselineskip}%
\IfBooleanTF{#1}{%
    \begin{alignat*}{2}
        & \underset{#3}{\text{#2}} & & #4 \\
        & \text{subject to~~}
        \ProcessList{#5}{ \insertopticonst }
        & &
    \end{alignat*}%
    }{%
    \begin{alignat}{2}
        & \underset{#3}{\text{#2}} & & #4 \\
        & \text{subject to~~}
        \ProcessList{#5}{ \insertopticonst }
        & & \nonumber
    \end{alignat}%
    }%
\endgroup%
}%
\newcommand\insertopticonst[1]{& & #1\\&}
\begin{document}

\title{Probing the geometry of correlation matrices with randomized measurements}

\author{Nikolai Wyderka}
\affiliation{Institut für Theoretische Physik III, Heinrich-Heine-Universität Düsseldorf, Universitätsstr.~1, D-40225 Düsseldorf, Germany}

\author{Andreas Ketterer}
\affiliation{Fraunhofer Institute for Applied Solid State Physics IAF, Tullastr.~72, 79108 Freiburg, Germany}

\date{\today}

\begin{abstract}
The generalized Bloch decomposition of a bipartite quantum state gives rise to a correlation matrix whose singular values provide rich information about non-local properties of the state, such as the dimensionality of entanglement. While some entanglement criteria based on the singular values exist, a complete understanding of the geometry of admissible correlation matrices is lacking. We provide a deeper insight into the geometry of the singular values of the correlation matrices of limited Schmidt number. First, we provide a link to the framework of randomized measurements and show how to obtain knowledge about the singular values in this framework by constructing observables that yield the same moments as one obtains from orthogonal averages over the Bloch sphere. We then focus on the case of separable states and  characterize the boundary of the set of the first two non-vanishing moments by giving explicit constructions for some of the faces and extremal points. These constructions yield a connection between the geometry of the correlation matrices and the existence problems of maximal sets of mutually unbiased bases, as well as SIC-POVMs.
\end{abstract}
\maketitle

\section{Introduction}\label{sec:intro}

Quantum states exhibit a plethora of non-classical effects such as entanglement, coherence, and Bell non-locality which  play an integral role for the development of novel quantum technologies. For this reason, the study of fundamental properties of quantum mechanics is central theme of ongoing research even almost one hundred years after the initial formulation of quantum theory. 
Of particular interest in this respect is the non-local character of quantum entanglement which acts as a resource for applications like long-range quantum communication \cite{muralidharan2016optimal}, metrology \cite{toth2012multipartite} and various other quantum information tasks \cite{friis2019entanglement}. 

Nevertheless, the fundamental features of entanglement remain elusive. For example, for states shared between more than two parties, even a unique notion of maximal entanglement does not exist, hinting at the rich underlying structure of multiparticle entanglement. Also bipartite systems are subject of active research, given the fact that deciding whether a given state is entangled or not, is a NP-hard problem \cite{gharibian2010strong}. 
Especially in the case of bipartite quantum systems, a large toolchain of criteria exists that allow to decide (and in some cases quantify) entanglement for some states. Most prominently, the entanglement of bipartite states of joint dimension no greater than six can be efficiently decided by the PPT criterion \cite{guhne2009entanglement}. For larger-dimensional systems, however, only sufficient criteria exist.

Some of these entanglement criteria are based directly on the Bloch coefficients of the states: expanding a bipartite quantum state in terms of a local operator basis allows to represent the state as a $d^2 \times d^2$-dimensional matrix of expansion coefficients. Famous entanglement criteria like the computable crossnorm or realignment (CCNR) criterion \cite{rudolph2005further, chen2002matrix} or the de~Vicente criterion \cite{de2006separability} (or a mixture of both \cite{Sarbicki-2020-Afamilyofmultipar}) are based on norms of the Bloch matrix, or more specifically, on its singular values. Nevertheless, these criteria have at least three flaws. First of all, it is unclear whether the known criteria are the strongest in terms of those singular values. Second, they only allow to detect entanglement, but not quantify it. Finally, in order to apply them experimentally, usually a full state tomography is performed which requires a large number of well-defined measurements. 

To remedy this last flaw, a connection between the moments of randomized measurement results of certain observables and the singular values of the Bloch matrix has been established recently \cite{PhysRevLett.122.120505,Imai-2020-Boundentanglementf}. It allows to get information on the symmetric polynomials of degrees two and four of the singular values by relatively few measurements and mild requirements on the measurement apparatus. However, it was unclear whether observables exist that would yield higher-order polynomials as well, such that the singular values can be uniquely reconstructed.

In the present work we make progress in three main directions. First, we extend the known entanglement criteria that are based on the singular values such that they not only allow to detect whether a state is entangled or not, but also give lower bounds on the dimensionality of the entanglement via its Schmidt number.
Second, we expand the aforementioned connection between randomized measurements and the singular values of the Bloch matrix. We show that in any fixed local dimension $d$, local observables exist that yield as their moments the even symmetric polynomials of the eigenvalues up to degree $d$, but no more (except for dimension $d=2$ and $d=3$). This highlights an interesting connection between Bloch sphere averages in $O(d^2-1)$ and local unitary averages in $SU(d)$. 
Third, we give a partial answer to the question of whether stronger entanglement criteria exist when the singular values are known completely, by exploring in detail the landscape spanned by the first two non-vanishing moments of the constructed observables. We show that the margin of potential improvement over the de~Vicente criterion is in general very small and might even vanish.

The paper is organized as follows. In Sec.~\ref{sec:corrent} we introduce the multi-dimensional Bloch vector representation and present novel criteria for the detection of the dimensionality of two-particle entanglement. Further on, in Sec.~\ref{sec:uniorthmom}, we introduce the notion of unitary and orthogonal moments and show in Sec.~\ref{sec:secfourthmoments} how they can be applied to experimentally test the aforementioned criteria. In Sec.~\ref{sec:boundarystates} we proceed to discuss the optimality of the entanglement detection by characterizing the boundary of the set of separable states in the space of spanned by the two lowest non-vanishing moments. In particular, we highlight connections between these findings and the existence problem of mutually unbiased bases and SIC-POVMs in Sec.~\ref{sec:mubs}. In Sec.~\ref{sec:numbermeasurements}, we simulate the randomized measurement process for the family of isotropic states in order to obtain a guess on the required number of measurements to implement our new criterion experimentally. Finally, we conclude in Sec.~\ref{sec:conclusion}.

\section{Correlation matrix and entanglement dimensionality}\label{sec:corrent}

Throughout this paper, we will be concerned with bipartite quantum states of dimension $d_1 \times d_2$, and we assume w.l.o.g.~that $d_1 \leq d_2$. We decompose this state in terms of local operator bases $\{\lambda_i\}_{i=0}^{d_1^2-1}$, $\{\tilde{\lambda}_i\}_{i=0}^{d_2^2-1}$, which we assume to be orthogonal, such that $\trace(\lambda_i \lambda_j^\dagger) = d_1 \delta_{ij}$,  $\trace(\tilde{\lambda}_i \tilde{\lambda}_j^\dagger) = d_2 \delta_{ij}$ and $\lambda_0 = \one_{d_1 \times d_1}$, $\tilde{\lambda}_0 = \one_{d_2 \times d_2}$. This choice guarantees that all other elements of the bases are traceless. Note that all of the results presented here are independent of the specific choice of bases, and popular options are the hermitian Gell-Mann matrices or the unitary Weyl basis \cite{Bengtsson-2006-GeometryofQuantum}.

In this basis, an arbitrary bipartite quantum state $\rho$ can be decomposed as
\begin{align}\label{eq:blochdeco}
    \rho &= \frac1{d_1d_2}\left[\one \otimes \one + \sum_{i=1}^{d_1^2-1} \alpha_i \lambda_i \otimes \one + \sum_{j=1}^{d_2^2-1} \beta_j \one \otimes \tilde{\lambda}_j \right. \\ \nonumber
    & \phantom{=\frac1{d_1d_2}..}\left.+\sum_{i=1}^{d_1^2-1}\sum_{j=1}^{d_2^2-1}T_{ij} \lambda_i \otimes \tilde{\lambda}_j\right],
\end{align}
where we have split the sum according to the occurrence of identity matrices. In this way, the local states are given by $\rho_A = \trace_B(\rho) = \frac1{d_1}\left[\one + \sum_i \alpha_i \lambda_i\right]$, and likewise for $\rho_B$, i.e., they are completely determined by the $d_i^2-1$-dimensional vectors $\vec{\alpha} = (\alpha_1, \ldots, \alpha_{d_1^2})^\dagger $ and $\vec{\beta}$, whereas the (classical and quantum) correlations of the state are captured by the $(d_1^2-1)\times (d_2^2-1)$-dimensional correlation matrix $T$. 

When considering features of the state that are independent of specific local basis choices, such as entanglement, it is useful to notice that every local unitary rotation of the quantum state corresponds to an orthogonal rotation of the matrix $T$, i.e., $T \rightarrow O_1  T O_2^\text{T}$ \cite{Horodecki-1996-Information-theoreti, Thirring-1980-LehrbuchderMathema}.
As such, many entanglement criteria exist which are based solely on different norms of the matrix $T$ \cite{Badziag-2007-Experimentallyfrien, de2006separability, Vicente-2011-Multipartiteentangl, Imai-2020-Boundentanglementf}. A prominent example of such a criterion is the de~Vicente-criterion \cite{de2006separability}, which states that for separable states, 
\begin{align}\label{eq:dvic}
    \Vert T \Vert_\text{Tr} \leq \sqrt{(d_1 - 1)(d_2-1)},
\end{align}
where $\Vert \cdot \Vert_\text{Tr}$ denotes the trace norm, i.e., the sum of the singular values of $T$. Other examples of non-local features that can be decided on the level of the singular values of $T$ include the notion of faithful entanglement, which is linked to the usefulness of a state for a teleportation task \cite{Horodecki-1998-Generalteleportatio, Guhne-2020-Geometryoffaithful} and whether or not a state constitutes non-locality that is detectable by a CHSH-like inequality \cite{Horodecki-1995-ViolatingBellinequ, Verstraete-2001-Entanglementversus}.

Given the large evidence for the expressiveness of the singular values of $T$, it is of great interest to characterize exactly the set of admissible singular values in different subsets of quantum states. Here, we will focus on the subsets of states of certain Schmidt numbers, which give information about the dimensionality of the entanglement of the state. The Schmidt number of a pure quantum state is defined by its Schmidt rank, i.e., the number of non-vanishing Schmidt coefficients in their Schmidt decomposition, given by \cite{guhne2009entanglement}
\begin{align}
    \ket{\psi} = \sum_{j=1}^{r} \sqrt{\alpha_j} \ket{a_j} \otimes \ket{b_j},
\end{align}
where $\alpha_j>0$, $\sum_{j=1}^r \alpha_j = 1$, and $r$ denotes the Schmidt rank $\text{SR}(\ket{\psi})$. For mixed states, the Schmidt number (SN) is defined via \cite{terhal2000schmidt}
\begin{align}
    \text{SN}(\rho) = \min\{r\,:\,\rho = \sum_i p_i \ketbra{\psi_i}{\psi_i}\text{ where SR}(\ket{\psi_i}) \leq r\}.
\end{align}
The $p_i$ are non-negative numbers summing to 1. 
The Schmidt number lies between 1 and $\min(d_1,d_2) = d_1$. A special subset are states of Schmidt number 1, which are called separable. They can be decomposed into convex combinations of product states, i.e.
\begin{align}
    \rho_\text{sep} = \sum_{i} p_i \rho_A^{(i)} \otimes \rho_B^{(i)},
\end{align}
where the $p_i$ constitute a probability distribution.

In order to obtain a criterion for Schmidt numbers based on the singular values, we extend the already introduced de~Vicente-criterion from Eq.~(\ref{eq:dvic}):
\begin{lemma}\label{lem:dVschmidt}
Let $\rho_k$ be a bipartite quantum state as before with Bloch decomposition as in Eq.~(\ref{eq:blochdeco}) and $\text{SN}(\rho_k) \leq k$. Then
\begin{align}
    \Vert T \Vert_{\text{Tr}} \leq \sqrt{(d_1-1)(d_2-1)} + \sqrt{d_1d_2}(k-1).
\end{align}
\end{lemma}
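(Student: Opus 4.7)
The plan is to reduce to pure states and evaluate $T$ explicitly in a Gell-Mann-type basis. Writing any Schmidt-number-$\leq k$ state as $\rho_k = \sum_i p_i \ketbra{\psi_i}{\psi_i}$ with $\text{SR}(\ket{\psi_i}) \leq k$, convexity of the trace norm together with linearity of $\rho \mapsto T$ reduces the claim to proving the bound for each pure $\ketbra{\psi_i}{\psi_i}$ separately. The local-unitary covariance $T \to O_1 T O_2^\text{T}$ then lets me assume the Schmidt canonical form $\ket{\psi} = \sum_{j=1}^k \sqrt{\alpha_j}\ket{j}\ket{j}$ and align the Gell-Mann basis so that the Schmidt basis is the computational basis on each side.

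Next I would pick the Hermitian Gell-Mann basis on each subsystem (normalized as $\trace(\lambda_i \lambda_j) = d_a \delta_{ij}$), split into symmetric off-diagonal $(s)$, antisymmetric off-diagonal $(a)$, and diagonal $(d)$ operators. The matrix elements $\bra{j'}\lambda^{(s/a)}_{ab}\ket{j}$ are supported only on the pairs $(j,j')\in\{(a,b),(b,a)\}$, while $\bra{j'}h_l\ket{j}=c_j^{(l)}\delta_{jj'}$. Evaluating $T_{\mu\nu}=\sum_{jj'}\sqrt{\alpha_j\alpha_{j'}}\bra{j'}\lambda_\mu\ket{j}_A\bra{j'}\tilde\lambda_\nu\ket{j}_B$ then forces $T$ to be block diagonal with only three non-vanishing blocks $T^{(ss)}$, $T^{(aa)}$, $T^{(dd)}$: the first two are diagonal in the natural pair labeling, with entries $\pm\sqrt{d_1 d_2\,\alpha_i\alpha_j}$ for $1\leq i<j\leq k$, while $T^{(dd)} = M^{(A)} D (M^{(B)})^\text{T}$, where $D=\diag(\alpha_1,\dots,\alpha_k)$ and $M^{(A/B)}$ collects the diagonal Gell-Mann coefficients $c_j^{(l)}$ on each subsystem.

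Summing the singular values in each block, the off-diagonal contribution gives
\begin{equation*}
\Vert T^{(ss)}\Vert_\text{Tr}+\Vert T^{(aa)}\Vert_\text{Tr} = \sqrt{d_1 d_2}\Bigl[\Bigl(\sum_{j=1}^k\sqrt{\alpha_j}\Bigr)^2 - 1\Bigr] \leq \sqrt{d_1 d_2}(k-1),
\end{equation*}
by Cauchy-Schwarz, $(\sum_j\sqrt{\alpha_j})^2 \leq k\sum_j \alpha_j = k$. For the diagonal block I would factor $T^{(dd)} = (M^{(A)} D^{1/2})(M^{(B)} D^{1/2})^\text{T}$ and apply the Schatten--Hölder inequality $\Vert XY^\text{T}\Vert_\text{Tr} \leq \Vert X\Vert_F \Vert Y\Vert_F$; the Gell-Mann completeness relation $\sum_l (c_j^{(l)})^2 = d_a-1$ evaluates $\Vert M^{(A)} D^{1/2}\Vert_F^2 = (d_1-1)\sum_j \alpha_j = d_1-1$, and likewise for $B$, yielding $\Vert T^{(dd)}\Vert_\text{Tr} \leq \sqrt{(d_1-1)(d_2-1)}$. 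Adding the two contributions produces the lemma's bound.

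The hard step will be the diagonal block. A naive estimate $\Vert T^{(dd)}\Vert_\text{Tr} \leq \sqrt{k}\,\Vert T^{(dd)}\Vert_F$ via the Frobenius norm overshoots $\sqrt{(d_1-1)(d_2-1)}$ for general Schmidt weights; the factorization-plus-Schatten--Hölder trick is what makes the estimate sharp on pure product states, ensuring the two contributions combine without slack into the stated bound.
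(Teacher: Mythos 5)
Your proof is correct, and it reaches the bound by a route that is technically different from the paper's, even though the two final contributions are identical. The paper never fixes a basis: it writes $\ketbra{\psi}{\psi}=\sum_{m,n}\sqrt{q_mq_n}\ketbra{a_m}{a_n}\otimes\ketbra{b_m}{b_n}$, so that $T=\sum_{m,n}\sqrt{q_mq_n}\,\vec{\alpha}^{(m,n)}\vec{\beta}^{(m,n)\dagger}$ is a sum of rank-one matrices, applies the triangle inequality for the trace norm, and evaluates each term as $\Vert\vec{\alpha}^{(m,n)}\Vert\,\Vert\vec{\beta}^{(m,n)}\Vert=\sqrt{(d_1-\delta_{mn})(d_2-\delta_{mn})}$ using only the normalization $\trace(\ketbra{a_m}{a_n}\ketbra{a_n}{a_m})=1$; the $m=n$ terms then give $\sqrt{(d_1-1)(d_2-1)}$ and the $m\neq n$ terms give $2\sum_{m<n}\sqrt{q_mq_n}\sqrt{d_1d_2}\leq\sqrt{d_1d_2}(k-1)$ --- exactly your two pieces, with the same Cauchy--Schwarz step on the Schmidt weights. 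You instead align the Gell-Mann basis with the Schmidt basis, show that $T$ is exactly block diagonal, read off the singular values of the $ss$ and $aa$ blocks in closed form, and replace the paper's rank-one triangle inequality on the $m=n$ part by a Schatten--H\"older estimate on the Gram factorization of $T^{(dd)}$ together with the completeness relation $\sum_l (c_j^{(l)})^2=d-1$. Your version buys more information (the exact spectrum of the off-diagonal blocks, and a precise localization of where slack can enter: only in the diagonal block and in Cauchy--Schwarz), at the cost of committing to a specific Hermitian basis; the paper's argument is shorter, basis-free, and carries over verbatim to the weighted $\Vert D_xCD_y\Vert_{\text{Tr}}$ generalization proved in the same appendix, where your explicit block bookkeeping would be more awkward to adapt.
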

The proof is found in Appendix \ref{app:dVschmidt}, where also a generalization for the family of criteria derived in Ref.~\cite{Sarbicki-2020-Afamilyofmultipar} for Schmidt number detection is also derived, which includes a bound for the CCNR criterion.
Note that if $d_1 = d_2 = d$, this bound simplifies to $\Vert T \Vert_{\text{Tr}} \leq kd-1$. As a remark, we emphasize that indirect bounds  on the Schmidt number using the de~Vicente criterion can be obtained via bounds on the concurrence derived in Ref~\cite{de2007lower}. These are, however, strictly weaker than the bound from Lemma~\ref{lem:dVschmidt}. Furthermore, the companion paper Ref.~\cite{liu2022characterizing} reports the same result as our Lemma~\ref{lem:dVschmidt}. We would like to stress, however, that their proof is based on bounds for the covariance criterion reported in Ref.~\cite{liu2022bounding}. Our proof is independent of this criterion and can be extended to the family of criteria of Ref.~\cite{Sarbicki-2020-Afamilyofmultipar}.

Lemma~\ref{lem:dVschmidt} allows for detection of lower bounds of Schmidt numbers in the following way: If the criterion is violated for a specific choice of $k$, then the state in question must have a Schmidt number of at least $k+1$.

A complete characterization of the set of occurring singular values in the set of states of certain Schmidt number would yield an optimal criterion for the entanglement dimensionality  whenever the singular values are known. Of course, such a criterion would only be useful if the singular values can be obtained experimentally which is a problem that we address in the following.

\section{Unitary and orthogonal moments}\label{sec:uniorthmom}

Studying singular values of the correlation matrix is also motivated by the framework of randomized measurements \cite{Emerson-2005-Scalablenoiseestim, Knill-2008-Randomizedbenchmark, Flammia-2011-DirectFidelityEsti, Aaronson-2007-Thelearnabilityof, Huang-2020-Predictingmanyprop, Morris-2019-SelectiveQuantumSt, Brandao-2020-Fastandrobustquan, Liang-2010-NonclassicalCorrela, Wallman-2012-Observerscanalways, Shadbolt-2012-Guaranteedviolation}. In many of these schemes, one studies moments of the outcome distribution of randomly locally rotated states, i.e., one obtains the quantities
\begin{align}\label{eq:rt}
    \R^{(t)}(\rho) := \intop \text{d}U_A\text{d}U_B [\trace(U_A\otimes U_B \rho U_A^\dagger \otimes U_B^\dagger \mathcal{A}\otimes \mathcal{A})]^t,
\end{align}
where $\mathcal{A}$ is some local observable and the integration spans over the local unitary groups of dimension $d_1$ and $d_2$ with respect to the normalized Haar measures $dU_A$ and $dU_B$, respectively. For now, we set $d_1=d_2\equiv d$ for simplicity.

However, in order to obtain symmetric polynomials of the singular values of $T$, one would rather be interested in moments resulting from random Bloch sphere rotations:
\begin{align}\label{eq:orthomoments}
    \mathcal{S}^{(t)}(\rho) := \frac{1}{V^2} \intop \text{d}\vec{\alpha} \text{d}\vec{\beta} [\trace(\rho \vec{\alpha}\cdot \vec{\lambda} \otimes \vec{\beta}\cdot \vec{\lambda})]^t,
\end{align}
where here the integration spans over $d^2-1$-dimensional unit vectors, the surface area of which is given by $V=2\sqrt{\pi}^{d^2-1}/\Gamma[(d^2-1)/2]$. Here, $\Gamma(z)$ denotes Euler's Gamma function with $\Gamma(n) = (n-1)!$ for positive integer $n$. In contrast to the unitary moments~(\ref{eq:rt}) defined above, we call Eq.~(\ref{eq:orthomoments}) orthogonal moments, as they correspond to averaging over orthogonal transformations $\vec{\alpha} \rightarrow O_1\vec{\alpha}$, $\vec{\beta} \rightarrow O_2\vec{\beta}$ and $T \rightarrow O_1TO_2^T$. Indeed, Eq.~(\ref{eq:orthomoments}) can be evaluated analytically to be zero for each odd $t$, and for $t=2$ and $t=4$, it yields \cite{Imai-2020-Boundentanglementf}
\begin{align}
    \mathcal{S}^{(2)}(\rho) &= \frac{1}{(d_1^2-1)(d_2^2-1)} \trace(TT^\dagger ), \\
    \mathcal{S}^{(4)}(\rho) &= \frac{3}{(d_1^4-1)(d_2^4-1)}\left[2\trace(TT^\dagger TT^\dagger ) + \trace(TT^\dagger )^2\right]. 
\end{align}
Note that $\trace(TT^\dagger) = \sum_{i=1}^{d_1^2-1} \sigma_i^2$ and $\trace(TT^\dagger TT^\dagger) = \sum_{i=1}^{d_1^2-1} \sigma_i^4$ can be written in terms of the singular values $\sigma_1, \ldots, \sigma_{d_1^2-1}$ of $T$.
A similar evaluation is possible for general $t$, and knowledge of the \emph{orthogonal} moments up to order $t=2d$ allows to recover the singular values of $T$ completely. However, the orthogonal moments are not  directly accessible in an experiment. Thus, the question arises whether it is possible to choose the observable $\mathcal{A}$ such that the unitary moments in Eq.~(\ref{eq:rt}) and the orthogonal moments coincide for all values of $t$.

The answer to this question depends strongly on the dimension. In Appendix~\ref{app:suso}, we show the following.
\begin{theorem}
Let $d\geq 2$ denote the local dimension of an $n$-partite quantum system. Then there exist unique diagonal operators $\mathcal{A}_d$, such that for all $n$-partite quantum states $\rho$
\begin{itemize}
    \item $\mathcal{R}^{(t)}_{\mathcal{A}_d}(\rho) = \mathcal{S}^{(t)}(\rho) = 0$ for all odd $t=1,3,\ldots $,
    \item $\mathcal{R}^{(t)}_{\mathcal{A}_d}(\rho) = \mathcal{S}^{(t)}(\rho)$ for the subset of even $t=0,2,\ldots,d$.
\end{itemize}
If the dimension is small, we have additionally:
\begin{itemize}
    \item If $d=3$, also the fourth moment coincides,
    \item if $d=2$, all moments coincide.
\end{itemize}
\end{theorem}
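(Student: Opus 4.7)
My plan is to reduce the theorem to a single symmetric-tensor matching identity on one party, and then to exploit the fact that a traceless Hermitian $d\times d$ operator $\mathcal{A}$ is parametrised by exactly $d-1$ independent power sums $p_k := \trace(\mathcal{A}^k)$ for $k = 2, \ldots, d$, with every higher $p_k$ rigidly determined by these via Cayley--Hamilton. Writing $U\mathcal{A}U^\dagger = \sum_i a_i(U)\lambda_i$ and using $\trace(\mathcal{A})=0$, the scalar in Eq.~(\ref{eq:rt}) becomes $\trace(\rho\,\bigotimes_{a=1}^n U_a\mathcal{A}U_a^\dagger)=\sum_I T_{i_1\ldots i_n}\prod_a a_{i_a}(U_a)$; raising to the $t$-th power and integrating exhibits $\mathcal{R}^{(t)}_\mathcal{A}(\rho)$ as a full contraction of $T^{\otimes t}$ with $n$ copies of the single-party tensor
\begin{equation}
(M_t^{(U,\mathcal{A})})_{i_1\ldots i_t} := \int dU\,a_{i_1}(U)\cdots a_{i_t}(U),
\end{equation}
while $\mathcal{S}^{(t)}(\rho)$ takes the analogous form with the isotropic sphere tensor $(M_t^{(O)})_{i_1\ldots i_t} := V^{-1}\int d\vec{\alpha}\,\alpha_{i_1}\cdots\alpha_{i_t}$. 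Matching coefficients as polynomials in the entries of $T$ then reduces the desired equality $\mathcal{R}^{(t)}_{\mathcal{A}_d}=\mathcal{S}^{(t)}$ for every state and every $n$ to the single symmetric tensor equation $M_t^{(U,\mathcal{A}_d)} = M_t^{(O)}$.

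Classical isotropic-tensor theory identifies $M_t^{(O)}$ as zero for odd $t$ and, for even $t$, as a fixed scalar multiple of the symmetric sum over perfect matchings of Kronecker deltas on $\{1,\ldots,t\}$. For the left-hand side I would expand $\int dU\,(U\mathcal{A}U^\dagger)^{\otimes t}$ by Weingarten calculus as a linear combination of permutation operators on $(\mathbb{C}^d)^{\otimes t}$, with coefficients of the form $\prod_\ell p_{c_\ell}$ indexed by cycle types, and then project each factor back onto the Bloch basis. The resulting $M_t^{(U,\mathcal{A})}$ lives in the space of $\operatorname{Ad}(U(d))$-invariant symmetric tensors on $\mathfrak{su}(d)^{\otimes t}$, with coefficients polynomial in $p_1,\ldots,p_t$. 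The matching condition becomes a finite polynomial system in the $p_k$: odd $t$ forces every odd $p_k$ with $k\leq t$ to vanish, equivalently that the spectrum of $\mathcal{A}$ is symmetric about zero, while even $t$ projects $M_t^{(U,\mathcal{A})}$ onto the one-dimensional $O(d^2-1)$-invariant subspace. Since $(p_2,\ldots,p_d)$ is a bijective coordinate system on the space of trace-zero diagonal spectra modulo permutation (via Newton's identities), I expect the joint constraints at $t = 2, 4, \ldots, d$ to cut out a unique $\mathcal{A}_d$, with the overall scale fixed by the $t=2$ normalisation $\trace(\mathcal{A}_d^2)=d$.

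For $t > d$, Cayley--Hamilton rigidly pins $p_t$ to a polynomial in $p_1,\ldots,p_d$, so $M_t^{(U,\mathcal{A}_d)}$ is locked to a proper subvariety of $\operatorname{Ad}(U(d))$-invariant tensors that generically misses the isotropic line occupied by $M_t^{(O)}$: this is the origin of the cutoff at $t=d$. The two low-dimensional exceptions follow from group-theoretic accidents that I would verify by direct computation. For $d=2$, $\operatorname{Ad}(U(2))=SO(3)=SO(d^2-1)$, so $M_t^{(U,\mathcal{A})}$ is automatically isotropic for every traceless $\mathcal{A}$ and every $t$, and full matching follows once the spectrum of $\mathcal{A}_2$ is fixed by $t=2$. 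For $d=3$, the space of fourth-order $PU(3)$-invariants on $\mathfrak{su}(3)^{\otimes 4}$ is small enough that the spectrum selected by the $t\leq 3$ conditions still satisfies $M_4^{(U,\mathcal{A}_3)}=M_4^{(O)}$, which has to be confirmed by an explicit Weingarten computation. The principal obstacle I foresee is precisely this Weingarten bookkeeping at general $t$: writing the permutation-indexed coefficients cleanly enough in the power-sum basis so that the solvability and uniqueness at $t\leq d$, together with the rigidity at $t > d$, all become transparent.
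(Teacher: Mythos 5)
Your outline follows the same route as the paper's Appendix~B: reduce to a single qudit, expand the unitary twirl $\tilde{A}^{(t)}=\int \mathrm{d}U\,(U^\dagger\mathcal{A}U)^{\otimes t}$ in permutation operators via Schur--Weyl, match against the isotropic sphere moments, and pin down the spectrum of $\mathcal{A}_d$ through its power sums (odd ones forced to vanish, even ones fixed for $t\le d$, uniqueness via Newton's identities, rigidity of higher power sums explaining the cutoff). However, the step you defer as ``Weingarten bookkeeping'' is not peripheral --- it is the entire existence half of the statement. The matching condition forces the twirl coefficients $a_\pi^{(t)}$ to vanish on every permutation containing a cycle of length $\ge 3$ and to take specific values otherwise; one must then show that this prescribed element of the commutant is actually realizable as the twirl of a \emph{single} operator, i.e.\ that the induced values of $\trace(\tilde{A}^{(t)}V_\sigma)$ factorize consistently over the cycle type of $\sigma$ into products of power sums. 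The paper carries this out by contracting with the full cycle, $\trace(\mathcal{A}^t)=\trace(\tilde{A}^{(t)}V_{(1,2,\ldots,t)})$, and evaluating the traces $\trace(V_{(i_1,j_1)\cdots(i_k,j_k)}V_{(1,2,\ldots,t)})$ via the surface-gluing numbers $a(n,g)$ (a genus expansion). ``I expect the joint constraints to cut out a unique $\mathcal{A}_d$'' is precisely the claim that needs proving, and your plan as written does not yet prove it. The same applies to the $d=3$ bonus matching of the fourth moment, which is part of the theorem and is obtained in the paper only by explicit evaluation.

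There is also a step that would fail as set up: you parametrize $\mathcal{A}$ as a traceless \emph{Hermitian} operator with real power sums. For $d\ge 8$ the unique solution of the resulting power-sum system has non-real roots, so no Hermitian $\mathcal{A}_d$ exists; the paper explicitly allows complex diagonal entries there (at the price of $\mathcal{A}_d$ no longer being an observable in the strict sense). The twirling argument survives unchanged, but your Newton's-identities bijection must be taken over complex trace-zero spectra, not over Hermitian operators. On the positive side, your $d=2$ argument via the surjectivity of $\operatorname{Ad}:SU(2)\to SO(3)$ is cleaner than the paper's verification that $\trace(\mathcal{A}_2^t)$ matches for all even $t$, and your tensor-level reduction to $M_t^{(U,\mathcal{A})}=M_t^{(O)}$ is an equivalent, slightly more transparent packaging of the paper's reduction to product states followed by the Schur--Weyl identification $\tilde{S}^{(t)}=\tilde{A}^{(t)}$.
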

The proof as well as some examples can be found in Appendix~\ref{app:suso}.
Interestingly, for $d \geq 8$, complex diagonal entries of $\mathcal{A}_d$ are required, making it formally not an observable. However, it can still be measured by assigning complex outcomes to a projective measurement result.
Note further that the constructed observables can easily be checked to not yield the correct moments for $t>d$ (for $d>4$). Thus, these results exclude the existence of observables that recover all moments (except for $d=2$).  We also show that it applies to cases of more than two parties as well. 

Finally, we would like to stress that it is possible to explicitly construct low-rank operators that yield matching moments $t=0,1,2,3,4$ for all dimensions $d$. These are given by
\begin{align}\label{eq:Ad4}
    \mathcal{A}_d^{(4)} = \text{diag}( \pm \kappa_1, \pm \kappa_2, 0, \ldots, 0)
\end{align}
with 
\begin{align}
    \kappa_{1,2} = \frac12 \sqrt{d \pm \sqrt{d\left(8-d-\frac{20}{d^2+1}\right)}},
\end{align}
and play a central role in the remainder of the manuscript (see  App.~\ref{app:suso} for further details).

\begin{figure*}[t]
    \centering
    \includegraphics[width=1.0\columnwidth]{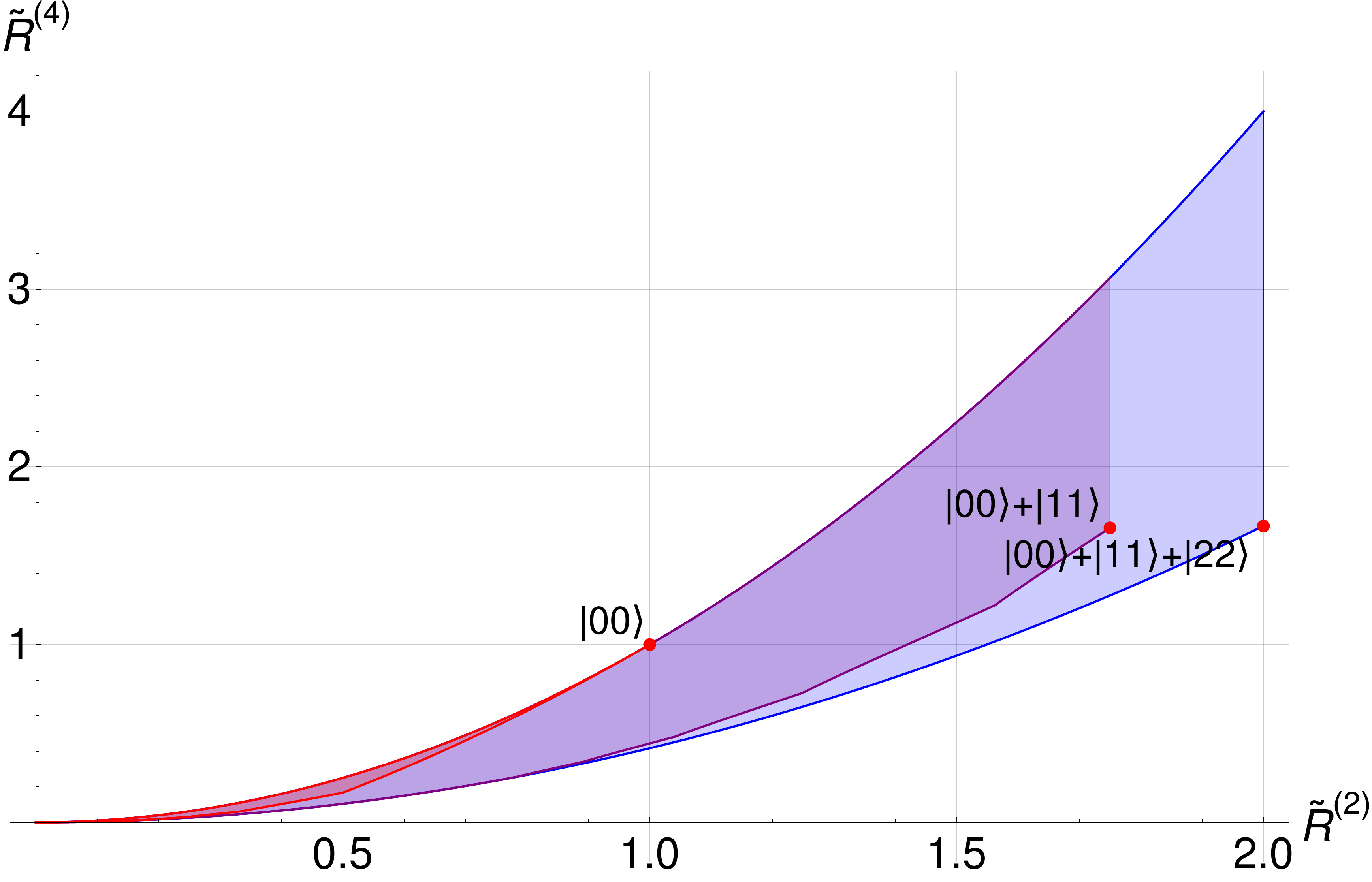}
    \includegraphics[width=1.0\columnwidth]{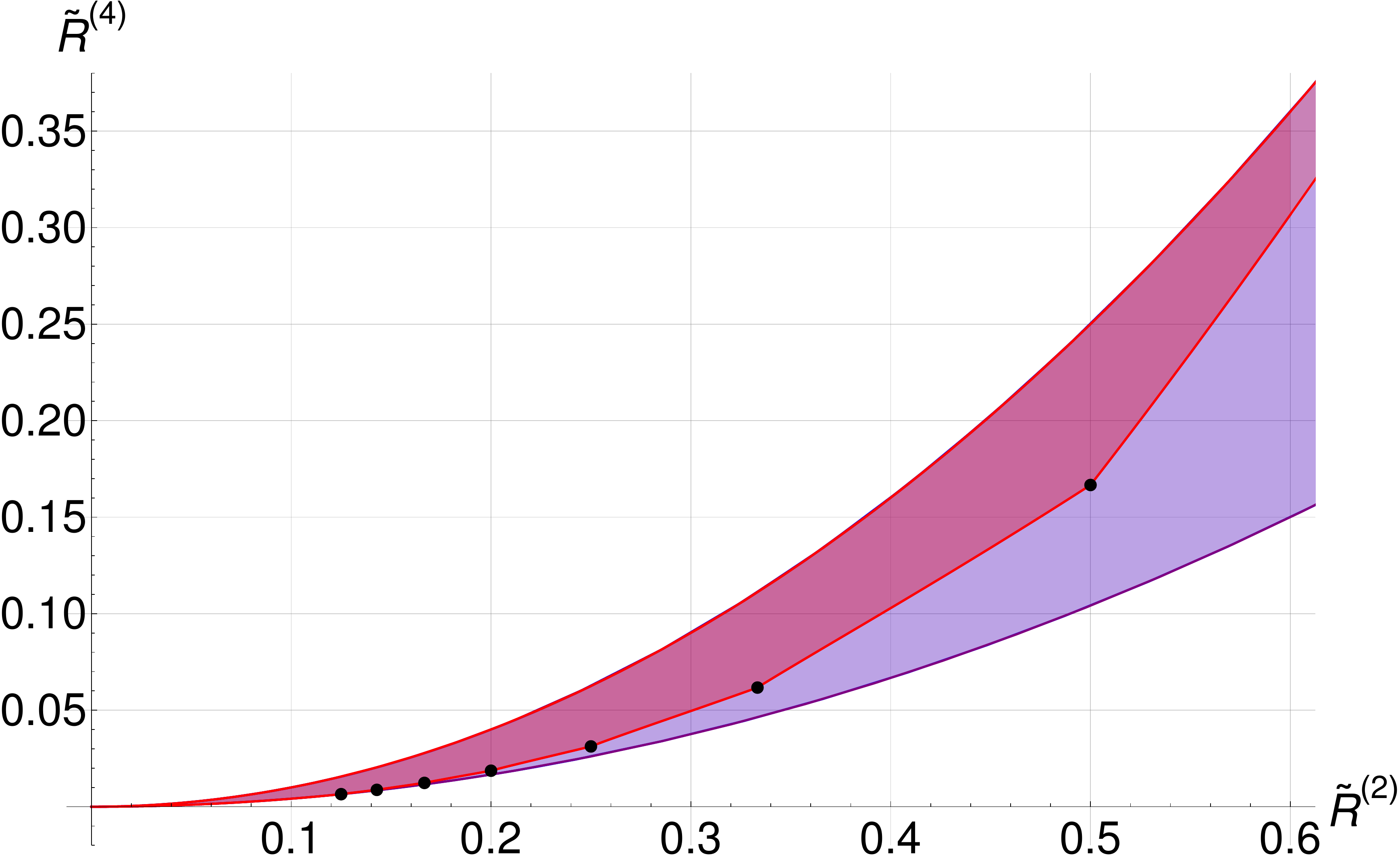}
    \caption{Left: The space of all states (red, violet and blue) and the subspaces of states of Schmidt number 2 (violet) and separable states (red) according to the generalized de~Vicente criterion in the space spanned by the second and fourth moment given by Eq.~(\ref{eq:rt}) for $d_1 = d_2 =3$.
    Right: zoomed in lower left part of the plot. The lower boundary of the separable set is given by Eq.~(\ref{eq:flb}) and exhibits $d^2-2$ kinks (displayed as black dots), which correspond to states with a dichotomous singular value structure.}
    \label{fig:fig1}
\end{figure*}

\section{Space of second and fourth moments}\label{sec:secfourthmoments}

Given the existence of these operators for moments two and four, we aim to obtain the boundaries of the sets of states of different Schmidt number in the space spanned by $\R^{(2)}$ and $\R^{(4)}$. To that end, we minimize and maximize $\R^{(4)}$ subject to keeping $\R^{(2)}$ constant and some positivity and Schmidt number constraints, respectively, in terms of the singular values. Of course, we do not know the exact constraints, but we will obtain outer approximations to the projected spaces by some necessary conditions, and then try to construct states for each point within the obtained shape to show that in some cases it is actually not an approximation, but a faithful representation of the space.

As a result we obtain the regions shown in Fig.~\ref{fig:fig1}, with $d_1 = d_2 = 3$, for the set of states of Schmidt number one (i.e., the set of separable states), two and three (i.e., all states). All in all, the respective regions represent the hierarchical entanglement structure of the space of bipartite quantum states with increasing entanglement dimensionality. In particular, Fig.~\ref{fig:fig1} visualizes the information about the entanglement structure that is gained by combining moments of different orders, as knowledge of the second moment alone detects the entanglement of strictly fewer states. Also, note that we normalize the values of the moments such that $\Rt^{(t)}(\ket{00}) = 1$ for product states, i.e.,
\begin{align}
    \Rt^{(2)} &:= (d_1+1)(d_2+1) \R^{(2)},\\
    \Rt^{(4)} &:= \frac{(d_1+1)(d_2+1)(d_1^2+1)(d_2^2+1)}{9(d_1-1)(d_2-1)} \R^{(4)}.
\end{align}
With this normalization the shape of the set of separable states only depends on the smaller of the two dimensions and exhibits $d_1^2 - 2$ kinks along the lower boundary, which can be stated analytically. In case of the sets of SN $k=2$ and $k=3$, we note that the upper bound presented in Fig.~\ref{fig:fig1} is likely not tight as its derivation is based on the rather weak purity bound  of $\trace(\rho^2) \leq 1$, and we were unable to find states lying on this boundary. The detailed optimization process, as well as the analytical form of the boundaries for all dimensions for the case $k=1$, can be found in Appendix~\ref{app:opti}.

Let us highlight some important observations for the set of separable states.
\begin{itemize}
    \item The upper bound of the red, separable set coincides with the upper bound for all states.
    \item The lower bound of the separable set is a step-wise defined function (Eq.~(\ref{eq:flb})) with kinks at the positions $\Rt^{(2)} = \frac{1}{d_1^2-1}, \frac{1}{d_1^2-2}, \ldots, \frac{1}{3}, \frac{1}{2}$. Thus, switching from $d_1$ to $d_1+1$ changes only some details on the left-hand side of the plot, leaving most of the figure unchanged. Due to our normalization, the difference of the figures for $d_1$ and $d_1+1$ vanishes for large $d_1$. As the normalization is dimension dependent, this does not imply that the criterion will be less efficient for larger $d_1$.
    \item States on the lower bound of the separable set are characterized by the fact that the singular values of the correlation matrix are of the form $(0,\ldots,0,l,m,\ldots,m)$ with $l \leq m$, and all of them summing up to $\sqrt{(d_1-1)(d_2-1)}$.
    \item States at the kinks are characterized by singular values being $(0,\ldots,0,m,\ldots,m)$, summing up to $\sqrt{(d_1-1)(d_2-1)}$.
    \item There exist entangled states with $1 < \Rt^{(2)} \leq \frac{(d_1+1)d_2}{d_1(d_2-1)}$. The maximal value of $\Rt^{(2)}$ is achieved by the maximally entangled state $\ket{\phi^+} = \frac{1}{\sqrt{d_1}}\sum_{i=0}^{d_1-1} \ket{ii}$.
    \item The lower bound of the set of all states is traced by the isotropic states, $\rho_\textrm{iso} = (1-p)\ket{\phi^+}\!\bra{\phi^+} + \frac{p}{d_1d_2}\one$. The lower bound coincides with that of the separable states for $\Rt^{(2)} \leq \frac{1}{d_1^2-1}$, which is exactly the region where $\rho_\textrm{iso}$ is separable. Thus, the whole range of entanglement of the isotropic state can be detected.
\end{itemize}

\section{Optimality of the criterion}\label{sec:boundarystates}

\begin{figure*}[t]
    \centering
    \includegraphics[width=1.82\columnwidth]{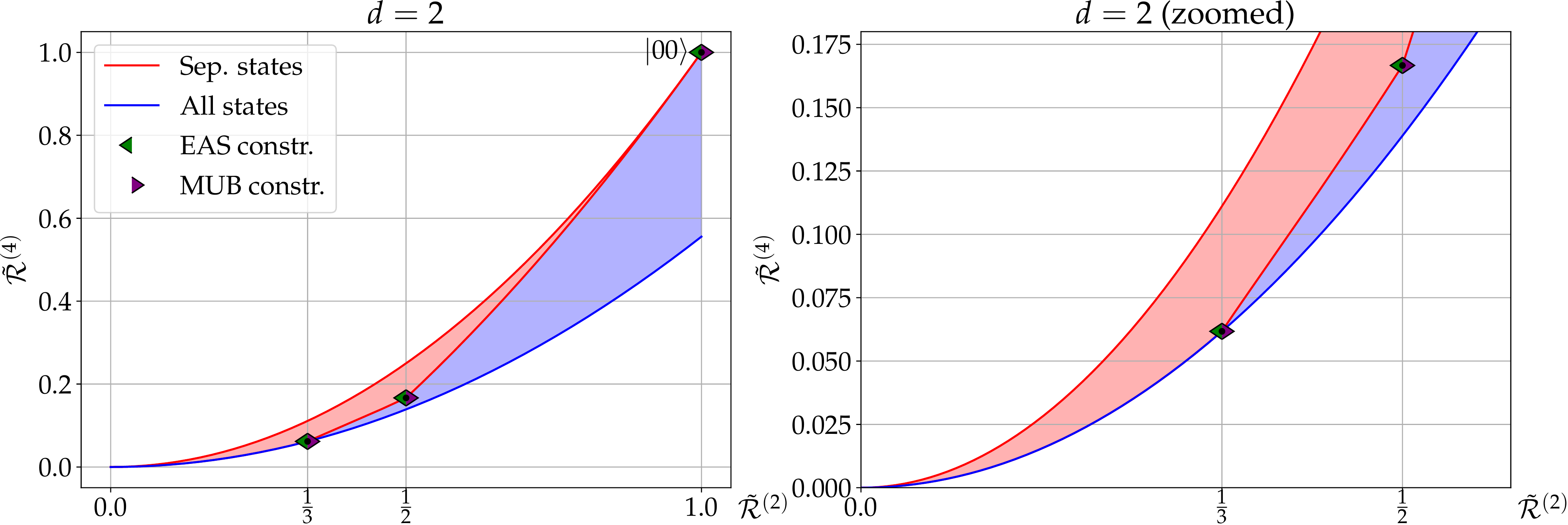}
    \includegraphics[width=1.82\columnwidth]{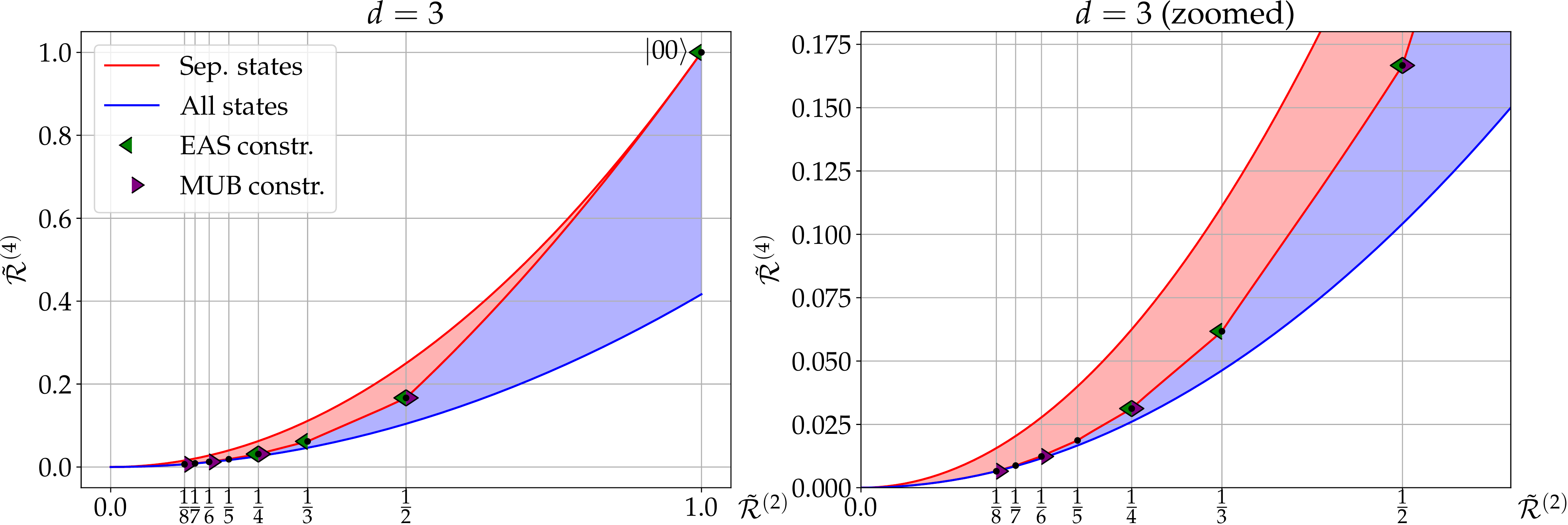}
    \includegraphics[width=1.82\columnwidth]{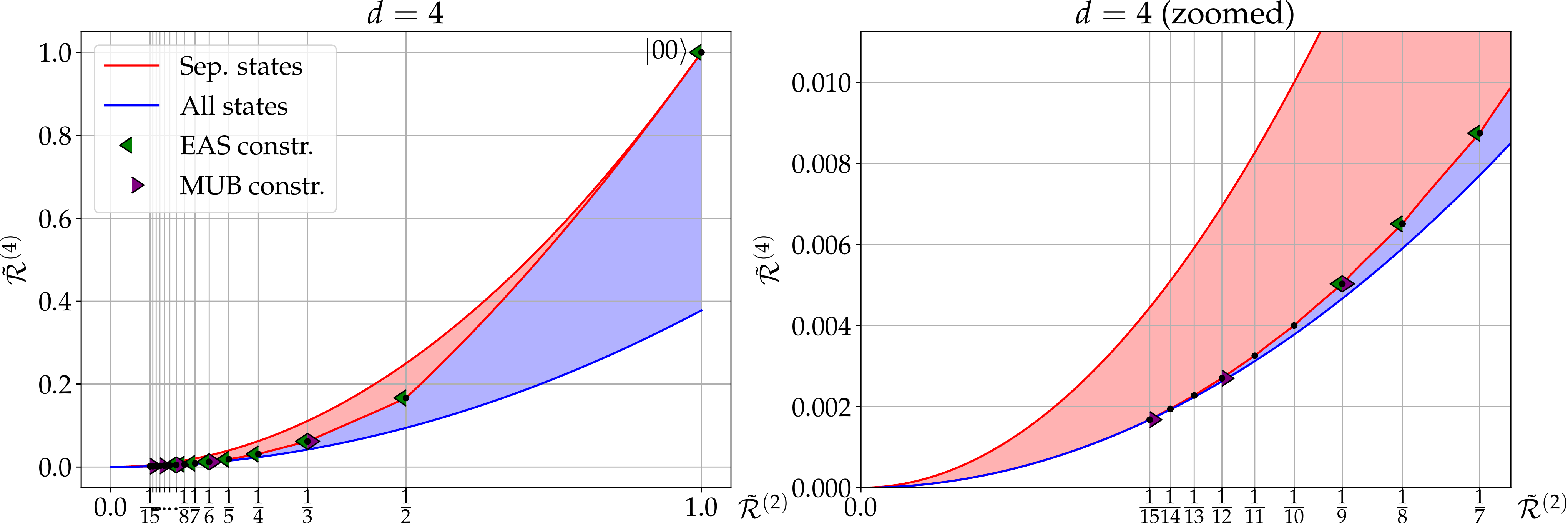}
    \includegraphics[width=1.82\columnwidth]{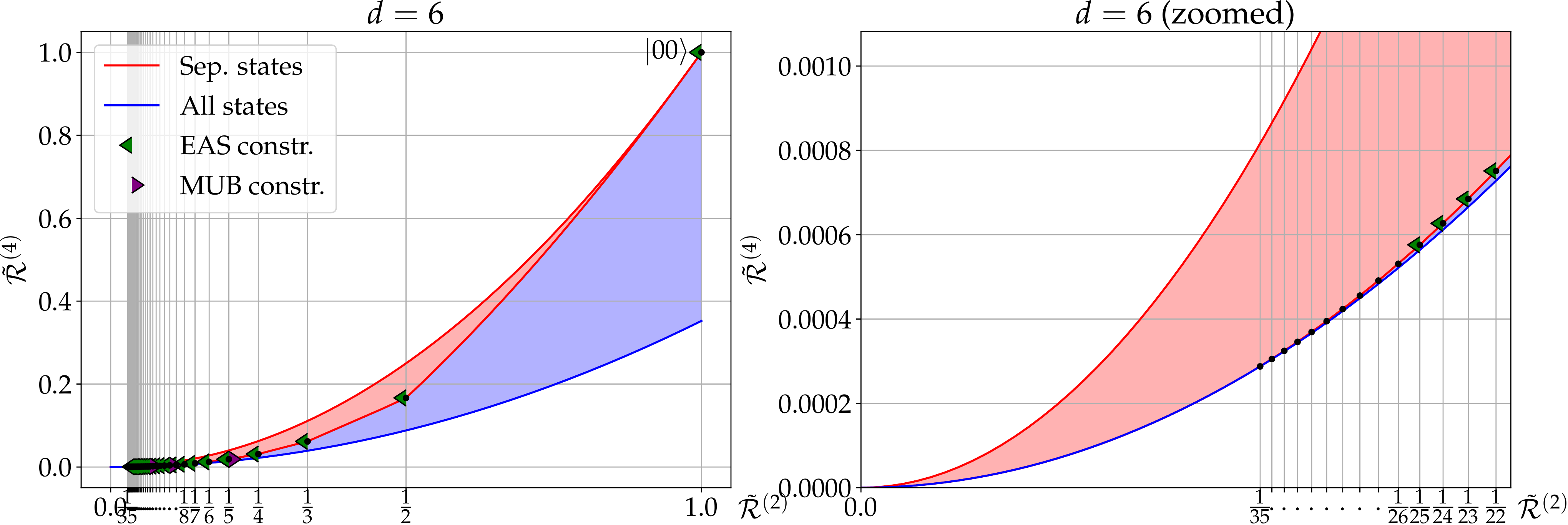}
    \caption{The moment landscape of the second and fourth moment. Displayed is the set of all states (red and blue) and the subset of separable states (red) together with its kinks on the lower boundary for $d=2,3,4$ and $6$. The markers labeled MUB (mutually unbiased basis) and EAS (equiangular set) indicate that a separable state at the corresponding point can be constructed by one of our methods.}
        \label{fig:const}
\end{figure*}

Given that we used a specific entanglement criterion in order to construct the set of separable states in the $\Rt^{(2)}$-$\Rt^{(4)}$-plane, the question remains of how tight this criterion really is. Thus, we attempt to construct a separable state for each point within the obtained figure. If this turns out to be possible, then this proves that the figure is optimal, i.e., there exists no better, smaller region, and thus, that the de~Vicente-criterion is indeed optimal when given the knowledge of $\sum_i \sigma_i^2$ and $\sum_i \sigma_i^4$.

Note that it suffices to construct separable states on the boundary of the set. All points in the interior can then be reached by mixing the boundary states with white noise in the way detailed below for the states on the upper boundary.

We start by explicitly constructing states for the upper boundary, connecting the pure product states at $(\Rt^{(2)}, \Rt^{(4)}) = (1,1)$ with the origin, where the maximally mixed state is situated. Considering the family of states
\begin{align}
    \rho_\text{upper}(p) = (1-p)\ketbra{00}{00} + \frac{p}{d_1d_2}\one, 
\end{align}
we note that the correlation matrix of $\rho_\text{upper}(p)$ is just a multiple of that of the pure product state, and therefore $\Rt^{(t)}[\rho_\text{upper}(p)] = (1-p)^t$, tracing exactly the upper boundary of the set.

For the lower boundary, we will restrict our attention to the case of $d_1 = d_2 \equiv d$.

As noted before, the lower bound for all states is traced by the isotropic state, which becomes separable for $\Rt^{(2)}\leq \frac{1}{d^2-1}$, where the lower bound of both sets coincide. Thus, we already have separable states for the lower bound in that region.
The situation for the rest of the lower bound is more complicated, but simplified by the fact that states at the kinks have a special singular value structure, namely that all of the non-vanishing singular values coincide and sum to $d-1$. 

\subsection{Constructions using equiangular sets}
An equiangular set (sometimes called equiangular lines) of size $N$ is a set of $N$ normalized states $\{\psi_i\}_{i=1}^{N}$, such that $\vert\braket{\psi_i|\psi_j}\vert = C$ for some constant $C$ if $i\neq j$ \cite{Welch-1974-Lowerboundsonthe, Tropp-2005-Complexequiangular}. A popular example is that of a SIC-POVM, where $C=\frac{1}{\sqrt{d+1}}$. Here, however, we consider sets with $C=\frac{1}{\sqrt{d}}$, from which states at the kinks can be constructed:
\begin{theorem} \label{thm:Thm1}
Let $\{\psi_i\}_{i=1}^{N}$ be an equiangular set of normalized vectors in $\mathbb{C}^d$, with $\vert\braket{\psi_i|\psi_j}\vert = \frac{1}{\sqrt{d}}$ if $i\neq j$. Then
\begin{align}
    \rho = \frac1N\sum_{i=1}^N \ket{\psi_i}\!\bra{\psi_i} \otimes \ket{\psi_i}\!\bra{\psi_i}
\end{align}
lies on the kink at $\Rt^{(2)}(\rho) = \frac{1}{N}$.
\end{theorem}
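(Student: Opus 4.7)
The plan is to exploit that $\rho$ is manifestly separable (being a convex combination of product states) and to read off its correlation matrix $T$ directly from the local Bloch expansions of the projectors $P_i = \ketbra{\psi_i}{\psi_i}$. Once I show that the singular values of $T$ form the pattern $(0,\ldots,0,m,\ldots,m)$ with $N$ equal nonzero entries summing to $d-1$, the claim will follow from the kink characterization stated just before Theorem~\ref{thm:Thm1}, and the value of $\Rt^{(2)}(\rho)$ reduces to a single line.

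I would work in a Hermitian local basis and write $P_i = \frac{1}{d}\bigl(\one + \sum_{\mu=1}^{d^2-1} a_\mu^{(i)} \lambda_\mu\bigr)$ with real Bloch coefficients. Trace orthogonality $\trace(\lambda_\mu\lambda_\nu) = d\,\delta_{\mu\nu}$ then yields $\trace(P_i P_j) = \frac{1}{d}(1 + \vec{a}^{(i)} \cdot \vec{a}^{(j)})$, so purity $\trace(P_i^2) = 1$ gives $\|\vec{a}^{(i)}\|^2 = d-1$, while the equiangular hypothesis $|\braket{\psi_i|\psi_j}|^2 = 1/d$ for $i\neq j$ forces $\vec{a}^{(i)} \cdot \vec{a}^{(j)} = 0$. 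Arranging the vectors $\vec{a}^{(i)}$ as the columns of a $(d^2-1)\times N$ real matrix $A$, this is summarized by $A^T A = (d-1)\one_N$.

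Substituting the local expansions into $\rho = \frac{1}{N}\sum_i P_i\otimes P_i$ and comparing with Eq.~(\ref{eq:blochdeco}) identifies the correlation matrix as $T = \frac{1}{N} A A^T$. Since $T$ is real, symmetric, and positive semidefinite, its singular values equal its eigenvalues; these in turn are $\frac{1}{N}$ times the nonzero eigenvalues of $A A^T$, which (sharing nonzero spectrum with $A^T A$) are all equal to $d-1$ with multiplicity $N$. Thus $T$ has $N$ singular values equal to $m=(d-1)/N$ and the remaining ones zero, meeting the kink condition verbatim. Feeding $\trace(TT^\dagger) = (d-1)^2/N$ into the normalization $\Rt^{(2)} = (d+1)^2\,\trace(TT^\dagger)/(d^2-1)^2$ then gives
\begin{align*}
\Rt^{(2)}(\rho) = \frac{(d-1)^2/N}{(d-1)^2} = \frac{1}{N}.
\end{align*}

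The main obstacle is essentially bookkeeping — keeping straight the $1/d$ prefactors in the Bloch decomposition, the $1/N$ in $\rho$, and the dimension-dependent normalization of $\Rt^{(2)}$. The conceptual content is one line: equiangularity with overlap $1/\sqrt{d}$ is exactly the threshold at which the traceless Bloch vectors of the $P_i$ become mutually orthogonal and of equal length, forcing $T$ to have a completely flat nonzero spectrum. As a by-product, $A^T A = (d-1)\one_N$ implies $N \leq d^2-1$, matching the admissible range of kink positions.
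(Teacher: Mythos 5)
Your proof is correct and follows essentially the same route as the paper's: both identify $T$ as a Gram-type matrix ($T=\frac{1}{N}AA^{T}$ in your notation, $T_{ij}=\vec{\kappa}_i\cdot\vec{\kappa}_j$ in the paper's) and use the equiangularity condition $|\braket{\psi_i|\psi_j}|^2=1/d$ to force the flat spectrum $(d-1)/N$ with multiplicity $N$. The only cosmetic difference is that you read off $A^{T}A=(d-1)\one_N$ from $\trace(P_iP_j)$ and transfer the nonzero spectrum to $AA^{T}$, whereas the paper verifies $T^2=\frac{d-1}{N}T$ directly via the swap-operator identity $\sum_a\lambda_a\otimes\lambda_a=dV$ --- the same computation in different packaging.
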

The proof can be found in Appendix~\ref{app:proofthm1}.

It follows from this theorem that, if there exist $d^2-1$ states with distance $\frac{1}{\sqrt{d}}$, we would have a symmetric separable state at each kink. However, the maximal size of such sets is subject to bounds and no systematic constructions are known for arbitrary $d$ \cite{Tropp-2005-Complexequiangular}. For small $d$, however, one can construct them explicitly.

For $d=2$, the vectors $\ket{0}, \ket{+}$ and $\ket{+_i}$ form a maximal equiangular set with the correct angle. As $d^2-1 = 3$ in this case, this yields states for all of the kinks.

For $d=3$, one can find a maximal set consisting of $4$ vectors, which is less than $8$, the number required to obtain a state for every kink, implying that one needs a different strategy for the rest of the kinks.

In order to obtain a lower bound on the maximal size of equiangular sets with overlap $\frac{1}{\sqrt{d}}$, we can borrow some results from the construction of SIC-POVMs. There, one searches for equiangular sets with $C=\frac{1}{\sqrt{d+1}}$ and it was conjectured by Zauner that the maximal set size is always given by $d^2$ \cite{ZAUNER-2011-QUANTUMDESIGNSFOU} (concrete SIC-POVMs are known for all dimensions $d\leq 193$ \cite{grasslSIC}). In order to construct sets with overlap $\frac{1}{\sqrt{d}}$, we can take a SIC-POVM of dimension $d-1$ and pad an extra dimension to the vectors. Of course, this procedure might be wasteful, such that it only yields a lower bound, and it relies on Zauner's conjecture being true. Nevertheless, it implies that we can always find at least $(d-1)^2$ vectors of the $d^2-1$ that we would need. 

Note that with this construction, we can also construct states on the boundary between the kinks:

\begin{corollary}\label{cor:cor1}
Let $\{\psi_i\}_{i=1}^{N} \cup \{\ket{\psi}\}$ be an equiangular set of size $N+1$ of normalized vectors in $\mathbb{C}^d$ with overlap $\frac{1}{\sqrt{d}}$. Then the family of states
\begin{align}
    \rho(p) = \frac{p}N\sum_{i=1}^N \ket{\psi_i}\!\bra{\psi_i} \otimes \ket{\psi_i}\!\bra{\psi_i} + (1-p) \ket{\psi}\!\bra{\psi} \otimes \ket{\psi}\!\bra{\psi}
\end{align}
traces the boundary of the separable set between the kinks at $\Rt^{(2)}[\rho(N/(N+1))] = \frac{1}{N+1}$ and $\Rt^{(2)}[\rho(1)] = \frac{1}{N}$.
\end{corollary}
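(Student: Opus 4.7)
The plan is to show that $\rho(p)$ is manifestly separable, to compute the singular-value spectrum of its correlation matrix from the equiangular structure, and to verify that as $p$ runs over $[N/(N+1),1]$ the pair $(\Rt^{(2)},\Rt^{(4)})$ sweeps exactly the boundary arc connecting the two kinks. Separability is immediate because $\rho(p)$ is a convex combination of pure product states. The non-trivial content lies in controlling the spectrum of $T(\rho(p))$.

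First I would translate the equiangular condition into a statement about the Bloch vectors $\vec{r}_i$ of $\ket{\psi_i}\!\bra{\psi_i}$ (and $\vec{r}_{N+1}$ of $\ket{\psi}\!\bra{\psi}$). Using the basis normalization $\trace(\lambda_i\lambda_j)=d\delta_{ij}$ from Sec.~\ref{sec:corrent}, one has $|\langle\psi_i|\psi_j\rangle|^2=\tfrac{1}{d}(1+\vec{r}_i\cdot\vec{r}_j)$. The assumption $|\langle\psi_i|\psi_j\rangle|=1/\sqrt{d}$ therefore forces pairwise orthogonality $\vec{r}_i\cdot\vec{r}_j=0$ for $i\neq j$, while purity gives $\|\vec{r}_i\|^2=d-1$ for all $i=1,\ldots,N+1$. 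This orthogonality is the workhorse of the argument.

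Second I would exploit linearity of the Bloch decomposition: a pure product state $\ket{\phi}\!\bra{\phi}\otimes\ket{\phi}\!\bra{\phi}$ has correlation matrix $\vec{r}\vec{r}^{\,T}$, so $T(\rho(p))=\tfrac{p}{N}\sum_{i=1}^{N}\vec{r}_i\vec{r}_i^{\,T}+(1-p)\,\vec{r}_{N+1}\vec{r}_{N+1}^{\,T}$. By the orthogonality of step one, each $\vec{r}_j$ is an eigenvector, so the nonzero singular values of $T(\rho(p))$ are $m=p(d-1)/N$ with multiplicity $N$ (on $\mathrm{span}(\vec{r}_1,\ldots,\vec{r}_N)$) and $l=(1-p)(d-1)$ with multiplicity one (on $\vec{r}_{N+1}$). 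For $p\in[N/(N+1),1]$ one checks $l\le m$ and $l+Nm=d-1$, which is precisely the singular-value signature $(0,\ldots,0,l,m,\ldots,m)$ that characterizes the lower boundary segment between the two kinks, as listed in Sec.~\ref{sec:secfourthmoments}.

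Finally I would substitute into the normalized moment to read off the endpoints: $\Rt^{(2)}(\rho(p))=(l^2+Nm^2)/(d-1)^2=(1-p)^2+p^2/N$ evaluates to $1/(N+1)$ at $p=N/(N+1)$ and to $1/N$ at $p=1$, with $\Rt^{(2)}$ monotone in between, so the whole arc is covered. Since $\Rt^{(4)}$ is also a symmetric function of the $\sigma_i$ and the signature already matches that of the boundary, the curve traced by $\rho(p)$ coincides with the lower boundary between the two kinks. The only non-routine step is the spectrum identification in step two, and the orthogonality of the Bloch vectors reduces it to inspection; everything else is algebra.
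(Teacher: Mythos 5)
Your proposal is correct and lands on the same reduction as the paper --- namely, that lying on the boundary arc between the kinks at $1/(N+1)$ and $1/N$ is equivalent to the correlation matrix having singular values $(0,\ldots,0,l,m,\ldots,m)$ with $0\le l\le m$, $m$ of multiplicity $N$ and $l+Nm=d-1$, combined with the fact that $T$ is a Gramian (hence PSD, so eigenvalues equal singular values). Where you differ is in how the spectrum is extracted. The paper proceeds indirectly: it computes $(T^2)_{ij}$ via the identity $\sum_{a\ge1}\lambda_a\otimes\lambda_a=dV-\one$, matches the result against the ansatz $T^2=mT+(l^2-ml)\ketbra{v_l}{v_l}$, and then separately verifies that $\ket{v_l}$ is the eigenvector with eigenvalue $l$ and that the trace equals $d-1$. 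You instead note up front that $\vert\braket{\psi_i|\psi_j}\vert^2=\tfrac1d(1+\vec r_i\cdot\vec r_j)$, so the overlap $1/\sqrt d$ is precisely the condition that the $N+1$ Bloch vectors are mutually orthogonal with $\Vert\vec r_i\Vert^2=d-1$; then $T=\tfrac pN\sum_i\vec r_i\vec r_i^{\,T}+(1-p)\vec r_{N+1}\vec r_{N+1}^{\,T}$ is already a spectral decomposition and $m=p(d-1)/N$ (multiplicity $N$), $l=(1-p)(d-1)$ can be read off, with $l\le m$ exactly on $p\in[N/(N+1),1]$ and $l+Nm=d-1$ automatic. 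This is more transparent --- the paper's swap-operator computation is, in effect, a disguised proof of the same orthogonality relation --- and it makes the endpoints immediate ($p=N/(N+1)$ gives $l=m$, i.e., the kink at $1/(N+1)$; $p=1$ gives $l=0$, the kink at $1/N$), with your monotonicity check of $\Rt^{(2)}(p)=(1-p)^2+p^2/N$ ensuring the whole arc is swept. The one hypothesis to keep explicit, which you use tacitly, is that the local operator basis is chosen hermitian so that the Bloch vectors are real and $T$ is a real symmetric PSD matrix.
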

The proof is given in Appendix~\ref{app:proofcor1}.

Note that in the limit of large dimension, an increasing percentage of the lower boundary is covered by this construction due to the fact, that we find states for each point between $\Rt^{(2)} = \frac1{(d-1)^2}$ and $\Rt^{(2)} = 1$ (under the assumption that Zauner's conjecture is true).

The question remains whether the missing kinks can also be covered by separable states. While we cannot answer this question completely, we can find additional states using MUBs.
\subsection{Constructions using MUBs}

A collection of orthonormal bases, $\{\ket{\psi_i^1}\}_{i=0}^{d-1}$, $\{\ket{\psi_i^2}\}_{i=0}^{d-1},\ldots$ is called mutually unbiased, if $\vert\braket{\psi_i^k|\psi_j^l}\vert = \frac{1}{\sqrt{d}}$ if $k \neq l$. For a fixed dimension $d$, there are at most $d+1$ mutually unbiased bases. However, only for prime power dimensions $d=p^m$, constructions of $d+1$ bases are known (see \cite{Bengtsson-2006-Threewaystolooka} and references therein). For $d=6$, it is widely believed that only $3$ MUBs exist. 
Indeed, taking a single vector out of each basis yields an equiangular set of the right overlap for Theorem~\ref{thm:Thm1}. However, as the maximal size of sets created in this way is given by $d+1$, this construction is not good enough for our purposes. However, a set of $m$ MUBs can still be used to construct states at some intermediate kinks at $\Rt^{(2)} = \frac1{l(d-1)}$ for $l=1,\ldots,m$:

\begin{theorem}\label{thm:Thm2}
Consider a set of $m$ MUBs  $\{\ket{\psi_j^1}\}_{j=1}^d, \ldots, \{\ket{\psi_j^m}\}_{j=1}^d$. Then the state
\begin{align}
    \rho &:= \frac{1}{md}\sum_{l=1}^m\sum_{k=1}^{d}  \ket{\psi_k^l}\!\bra{\psi_k^l} \otimes  \ket{\psi_k^l}\!\bra{\psi_k^l}
\end{align}
lies at the kink at $\Rt^{(2)} = \frac1{m(d-1)}$.
\end{theorem}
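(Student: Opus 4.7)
The plan is to show that the correlation matrix $T$ of $\rho$ equals $1/m$ times an orthogonal projector of rank $m(d-1)$. This forces the non-zero singular values of $T$ to be $m(d-1)$ copies of $1/m$, summing to $d-1$, which is exactly the kink signature (``all non-zero singular values equal, summing to $\sqrt{(d_1-1)(d_2-1)}$'') identified in the bulleted observations of Sec.~\ref{sec:secfourthmoments}. The claimed value $\Rt^{(2)}=1/[m(d-1)]$ then follows by direct substitution.

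First, I would translate the MUB data into Bloch-vector statements. Writing $\ketbra{\psi_k^l}{\psi_k^l}=\tfrac{1}{d}(\one+\vec{v}_k^l\cdot\vec{\lambda})$ and using $\trace(\lambda_i\lambda_j)=d\delta_{ij}$, the inner products $\trace(P_k^l P_k^l)=1$, $\trace(P_k^l P_{k'}^l)=0$ for $k\neq k'$, and $\trace(P_k^l P_{k'}^{l'})=1/d$ for $l\neq l'$ translate respectively to $|\vec{v}_k^l|^2=d-1$, $\vec{v}_k^l\cdot\vec{v}_{k'}^l=-1$, and $\vec{v}_k^l\cdot\vec{v}_{k'}^{l'}=0$. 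Assembling $\rho$ via $\trace[(\sigma\otimes\tau)(\lambda_i\otimes\lambda_j)]=\trace(\sigma\lambda_i)\trace(\tau\lambda_j)$ yields
\begin{equation}
T=\frac{1}{md}\sum_{l=1}^m M^l,\qquad M^l:=\sum_{k=1}^d \vec{v}_k^l(\vec{v}_k^l)^{\mathrm{T}}.
\end{equation}

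The central step is a single-basis spectral analysis. The non-zero eigenvalues of $M^l$ agree with those of the $d\times d$ Gram matrix $G^l$ whose $(k,k')$ entry is $\vec{v}_k^l\cdot\vec{v}_{k'}^l$; by the preceding step $G^l = dI - J$ with $J$ the all-ones matrix, whose spectrum is $\{0,d,\ldots,d\}$. Hence $M^l = d\,\Pi_l$, where $\Pi_l$ is the orthogonal projector onto the $(d-1)$-dimensional subspace $\mathrm{span}\{\vec{v}_k^l\}_k$. The inter-basis orthogonality $\vec{v}_k^l\cdot\vec{v}_{k'}^{l'}=0$ forces these ranges to be mutually orthogonal across distinct $l$, so $\Pi:=\sum_l \Pi_l$ is again an orthogonal projector, now of rank $m(d-1)$, and $T=\Pi/m$.

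To conclude, $T$ has $m(d-1)$ non-zero singular values, all equal to $1/m$, giving $\Vert T\Vert_\text{Tr}=d-1$ (so the de~Vicente bound is saturated and $\rho$ sits at a kink on the lower boundary) and $\sum_i\sigma_i^2=(d-1)/m$. Substituting into $\Rt^{(2)}=(d+1)^2\mathcal{S}^{(2)}=\sum_i\sigma_i^2/(d-1)^2$, which is the normalized second orthogonal moment for $d_1=d_2=d$, yields $\Rt^{(2)}=1/[m(d-1)]$. The only non-routine ingredient is the identification $M^l=d\,\Pi_l$ via the Gram-matrix spectrum; the remaining bookkeeping is then straightforward once the Bloch-vector inner products are in hand.
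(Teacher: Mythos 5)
Your proposal is correct. It reaches the same intermediate goal as the paper --- showing that $T$ is $\tfrac1m$ times a rank-$m(d-1)$ orthogonal projector, so that all non-zero singular values equal $\tfrac1m$ and sum to $d-1$ --- but by a genuinely different mechanism. The paper's proof in Appendix~E verifies the projector identity $T^2=\tfrac1m T$ by a direct index computation on $T^2_{ij}$, using the swap-operator identity $\sum_a\lambda_a\otimes\lambda_a=dV-\one$ together with the overlap relation $\vert\braket{\psi_k^l|\psi_{k'}^{l'}}\vert^2=\delta_{ll'}\delta_{kk'}+(1-\delta_{ll'})\tfrac1d$, and then reads off the rank from the trace. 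You instead work entirely in Bloch space: the translation of the MUB overlaps into $\vert\vec v_k^l\vert^2=d-1$, $\vec v_k^l\cdot\vec v_{k'}^l=-1$, $\vec v_k^l\cdot\vec v_{k'}^{l'}=0$ is correct (and the identity $\sum_k\vec v_k^l=0$, which follows from completeness of each basis, is consistent with the rank-$(d-1)$ Gram matrix $G^l=dI-J$), so each $M^l=d\,\Pi_l$ and the $m$ ranges are mutually orthogonal, giving $T=\tfrac1m\sum_l\Pi_l$ explicitly. This buys a more transparent geometric picture --- each MUB carves out a $(d-1)$-dimensional subspace of $\mathbb{R}^{d^2-1}$ and distinct MUBs give orthogonal subspaces, which also makes the bound $m\le d+1$ visible as $m(d-1)\le d^2-1$ --- at the cost of invoking the standard fact that $VV^{\mathrm{T}}$ and $V^{\mathrm{T}}V$ share their non-zero spectrum. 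The final substitution $\Rt^{(2)}=\sum_i\sigma_i^2/(d-1)^2=1/[m(d-1)]$ matches the paper's normalization.
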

The proof is given in Appendix~\ref{app:proofthm2}.

With all of these constructions, we are able to construct states for many of the kinks, but not all of them. For example, for $d=3$, using a maximal equiangular set of size 4 , together with the MUB constructions, yields states for the kinks at $K=1,2,3,4,6$ and $8$, thus, the states at $K=5$ and $K=7$ are missing. For $d=4$, equiangular sets yield states for $K=1,\ldots,9$, and the MUB construction adds the cases $K=12$ and $K=15$, leaving open the cases $K=10,11,13$ and $14$. For $d=6$ the situation is more complex, as only sets of three MUBs are known. 
The situation is displayed in Fig.~\ref{fig:const}.
 
\section{Existence of MUBs and SIC-POVMs}\label{sec:mubs}

It is conjectured that there exist at most $3$ MUBs in $d=6$. However, there is no proof for this, in fact, there is not even a proof that the number of MUBs is not maximal, i.e., $d+1=7$.  We will focus here on the question whether $6$ or $7$ MUBs in $d=6$ exist.

According to Theorem~\ref{thm:Thm2}, we have:
\begin{itemize}
    \item If $6$ MUBs exist, then there exists a symmetric and homogeneous (meaning $p_i=p_j$) separable state at $\Rt^{(2)} =  \frac1{30}$, with a correlation matrix being proportional to a projector.
    \item If $7$ MUBs exist, then there exists a symmetric and homogeneous decomposition of the isotropic state at $\Rt^{(2)} = \frac1{d^2-1} = \frac1{35}$ with a correlation matrix being proportional to a projector.
\end{itemize}
The contraposition states that, if such symmetric, homogeneous separable states do not exist at the given coordinates, then such sets of MUBs do not exist. We do not consider the case of $m=5$ here, because an equiangular set construction exists at $\Rt^{(2)} = \frac1{25}$, making it unlikely that our trick works.

We can formulate a relaxed version of the existence problem in terms of a convex feasibility problem with rank constraint. For that, we assume w.l.o.g.~that the matrix basis we choose is hermitian, making the correlation matrix $T$ real-valued. Then, the feasibility problem can be stated like this:

\opti{find}{}{T\in \mathbb{R}^{(d^2-1) \label{eq:feasibility}\times (d^2-1)}}{T\text{ symmetric}, T\geq0,\nonumber;\one \otimes \one + \sum_{ij} T_{ij}\lambda_i \otimes \lambda_j \in \text{SEP},\nonumber;\trace(T^k)=\frac{d-1}{m^{k-1}}\quad\text{for }k\geq1.\nonumber}

Here, $\text{SEP}$ stands for the set of separable states. If the problem is infeasible, then no $m$ MUBs in dimension $d$ exist. Note that the constraints in the feasibility problem are hard to implement, but could be tackled by a combination of two SDP hierarchies \cite{Navascues-2008-Aconvergenthierarc, Yu-2020-Quantum-inspiredhie}. Using the dual representation of this problem, it might be possible to obtain infeasibility certificates that allow to analytically show the infeasibility of (\ref{eq:feasibility}) \cite{vandenberghe1996semidefinite}.

\section{Number of required measurements for entanglement detection}\label{sec:numbermeasurements}

\begin{figure*}[t]
    \centering
    \includegraphics[width=2.0\columnwidth]{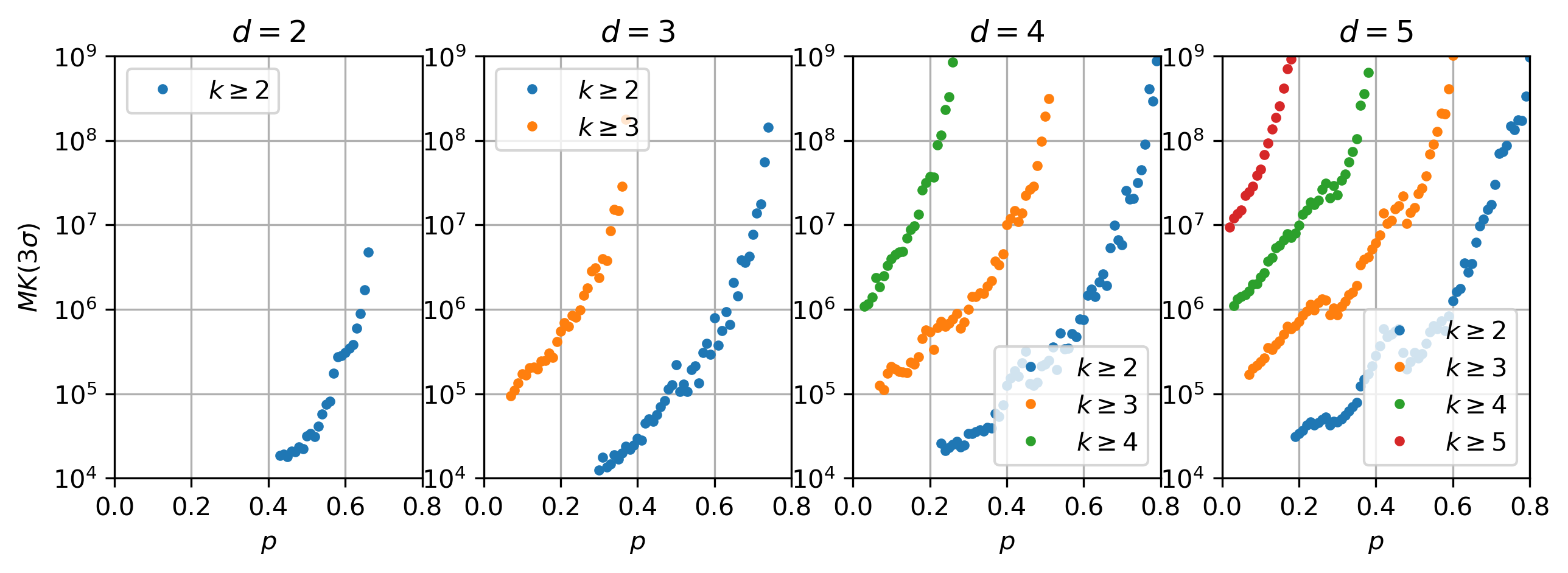}
    \caption{Estimated number of total measurements $MK$ required to observe a $3\sigma$ violation of the criterion in Lemma~\ref{lem:dVschmidt} using randomized measurements for dimensions $d=2,3,4,5$ for the isotropic state in Eq.~(\ref{eq:iso}) as a function of the noise parameter $p$. For each dimension $d$, we display the required number of measurements to detect Schmidt numbers of at least $2, \ldots, d$. Note that the curves start at some value $p_\text{min}$ where our criterion starts to be applicable (for smaller values of $p$, $\R^{(2)}$ alone allows for detection of that Schmidt number), and ends at $p_\text{max}$, where the criterion fails to detect higher Schmidt numbers.}
    \label{fig:measurement_sim}
\end{figure*}

To conclude, let us estimate the total number of required measurements for detecting a violation of the criterion in Lemma~\ref{lem:dVschmidt} for the family of isotropic states,
\begin{align}\label{eq:iso}
    \rho_\text{iso}(p) = (1-p)\ketbra{\phi^+}{\phi^+} + \frac{p}{d_1d_2}\one,
\end{align}
parameterized by a noise parameter $p$. 
To that end, we simulate the measurement process of the two moments $\R^{(2)}$ and $\R^{(4)}$ for the observable $\mathcal{A}_d^{(4)}$ given in Eq.~\ref{eq:Ad4}. The total number of measurements is given by $MK$, where $M$ denotes the number of random local unitaries, whereas $K$ denotes the number of measurements per unitary in order to estimate the trace. 

We first simulate the measurement processes for fixed $MK$ while varying $M$. Looking at the achievable standard deviations of the two moments suggests that always choosing $K\approx 100$ will yield standard deviations close to the minimum. We therefore fix $K=100$ and simulate the measurement process $100$ times for each choice of noise parameter $p$ and dimension $d_1 = d_2 = d \in \{2,3,4,5\}$, and values $M\in\{[10, 30, 100, 300, 1000\}$. As expected, the standard deviation of the criterion over the 100 repetitions scales with $1/\sqrt{M}$, which allows us to extrapolate the deviation until we observe a $3\sigma$ violation of our criterion for different values of $k$. 

We display the estimated number of measurements $MK$ to obtain a $3\sigma$ violation of the criterion as a function of the noise parameter $p$ in Fig.~\ref{fig:measurement_sim}. 
The results suggest that entanglement detection (i.e., detecting Schmidt umber $k \geq 2$) in the regime where our criterion applies is possible starting from roughly $2\cdot10^4$ measurements for lower values of the parameter $p$, regardless of the dimensions, 
whereas detection of higher Schmidt numbers requires at least $10^5$ (for $k\geq 3$) or $10^6$ (for $k\geq 4$) measurements. Note that for increasing values of the noise parameter $p$ also the number of required measurements increases. 

\section{Conclusions}\label{sec:conclusion}

In this paper, we generalized entanglement criteria based on norms of correlation matrices of bipartite quantum states in order to detect their Schmidt numbers and connected them to moments of randomized measurements. We constructed observables whose random moments coincide with orthogonal moments for $t\leq d$, such that their measurement gives direct insight into the singular values of the correlation matrix of a quantum state. Furthermore, we provided explicit low-rank observables that yield the second and fourth moments, i.e., knowledge of the sum of squares and the sum of quarts of the singular values of the underlying state.

Further on, we explored the landscape of the second and fourth moments for arbitrary dimension $d$. In particular, we showed how to express the introduced Schmidt number criterion in this language and we showed that at least in the case of $d=2$ (and in the limit of large dimensions, provided that Zauner's conjecture is true), the subset of separable states predicted by the de~Vicente criterion is optimal. To do so, we constructed separable states for each point in the set resulting from the de~Vicente criterion by exploiting the existence of SIC-POVMs and sets of mutually unbiased bases in specific dimensions. In this way, we shed light onto the question of whether a given vector of singular values exists in a (separable) quantum state.

An immediate future research direction in this context would be a similar analysis for the subset of higher Schmidt numbers: Here, the question of whether our generalized criterion is optimal is even less clear. Nevertheless, the additional knowledge of higher moments $\Rt^{(t)}$ for $t>4$ allows to reconstruct the singular values of the correlation matrix more precisely and would eventually allow to detect all states that violate our criterion.

In principle, it should be even possible to detect all states of a certain Schmidt number using randomized measurements, as the tool allows one to measure all polynomial local unitary invariants \cite{wyderka2022complete}. Identifying the most relevant ones for bounding the Schmidt number of a state is a challenging task for further studies.

Given the connection between the existence of MUBs and SIC-POVMs to the present geometrical problem of characterizing the set of separable quantum states, it seems fruitful to fully understand the geometry of sets of singular values existing in states of limited Schmidt numbers. However, solving this task could be as demanding as solving the aforementioned existence problems directly.

Our findings pose the more general question of generalizations of known entanglement criteria to the detection of entanglement dimensionality. It has been known before that the reduction map criterion can be generalized to detect the Schmidt number \cite{terhal2000schmidt}, and recently, it was shown that also the covariance matrix criterion can be extended in this direction \cite{liu2022bounding}. The same question could be investigated for other criteria as well.

Finally, we assessed the experimental feasibility of the randomized measurement scheme to detect Schmidt numbers using our criterion and found a reasonable number of total measurements of around $10^6$ would suffice to detect Schmidt numbers of at least $2$, $3$ and $4$. This implies that the measurement scheme could be implemented in optical setups as well as relatively slow trapped ion experiments, where preparation and entangling gate times in the range of microseconds to milliseconds are usually limiting the repetition rate \cite{bruzewicz2019trapped}. In all of these setups, however, the implementation of Haar random local unitary gates could be challenging, and a thorough analysis of the impacts of deviating distributions on the results of randomized measurements would be advantageous for real world implementations.

\begin{acknowledgments}
We thank Felix Huber for fruitful discussion. N.W.~acknowledges support by the QuantERA project QuICHE via the German Ministry of Education and Research (BMBF Grant No.~16KIS1119K). A.K.~acknowledges funding from the  
Ministry of Economic
Affairs, Labour and Tourism Baden-Württemberg, under the
project QORA.

{\it Note:} During the preparation of this manuscript, we noticed that related results were derived in Ref.~\cite{liu2022characterizing}. While they derive the same criterion as our Lemma~\ref{lem:dVschmidt}, their proof is completely different and makes use of the covariance matrix criterion. Furthermore, they proceed by comparing the criterion to fidelity
based Schmidt rank witnesses and discuss generalizations to multi-partite states, whereas we focus on the optimality and experimental implementability.
\end{acknowledgments}

\appendix
\section{Proof of Lemma~\ref{lem:dVschmidt} and generalizations}\label{app:dVschmidt}
First, we prove the statement of Lemma~\ref{lem:dVschmidt} in the main text. After that, we generalize the statement to the whole family of entanglement criteria presented in \cite{Sarbicki-2020-Afamilyofmultipar}.
\setcounter{thmc}{0}

\begin{lemma}
Let $\rho_k$ be a bipartite quantum state as before with Bloch decomposition as in Eq.~(\ref{eq:blochdeco}) and $\text{SN}(\rho_k) \leq k$. Then
\begin{align}
    \Vert T \Vert_{\text{Tr}} \leq \sqrt{(d_1-1)(d_2-1)} + \sqrt{d_1d_2}(k-1).
\end{align}
\end{lemma}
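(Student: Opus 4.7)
The plan is to reduce via convexity to pure states of Schmidt rank at most $k$, and then to split such a pure state's density matrix into a manifestly separable ``diagonal'' part and a Hermitian ``off-diagonal'' correction whose correlation matrix can be bounded by a rank-plus-Frobenius-norm argument.

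First, since $\text{SN}(\rho_k)\le k$ means that $\rho_k=\sum_i p_i\ketbra{\psi_i}{\psi_i}$ with $\text{SR}(\ket{\psi_i})\le k$, and since $T$ depends linearly on the state, the triangle inequality for the trace norm reduces the claim to proving the bound for a single pure state $\ket{\psi}$ of Schmidt rank $\le k$. Given the Schmidt decomposition $\ket{\psi}=\sum_{j=1}^k\sqrt{\mu_j}\ket{a_j}\ket{b_j}$, I would write $\rho=\rho_\text{diag}+M_\text{off}$, where $\rho_\text{diag}=\sum_j\mu_j\ketbra{a_jb_j}{a_jb_j}$ is a separable mixture of product states and $M_\text{off}=\sum_{i\neq j}\sqrt{\mu_i\mu_j}\ketbra{a_ib_i}{a_jb_j}$ is Hermitian with vanishing trace and vanishing partial traces. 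Linearity of $T$ and the triangle inequality give $\Vert T(\rho)\Vert_\text{Tr}\le\Vert T(\rho_\text{diag})\Vert_\text{Tr}+\Vert T(M_\text{off})\Vert_\text{Tr}$, and the original de~Vicente bound (\ref{eq:dvic}) controls the first term by $\sqrt{(d_1-1)(d_2-1)}$.

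For $\Vert T(M_\text{off})\Vert_\text{Tr}$ the central observation is that
\begin{align*}
T_{ij}(M_\text{off})=\sum_{p\neq q}\sqrt{\mu_p\mu_q}\,\bra{a_q}\lambda_i\ket{a_p}\bra{b_q}\tilde\lambda_j\ket{b_p}
\end{align*}
factors as $T=M_A D M_B^T$, with $D$ a diagonal matrix of size $k(k-1)\times k(k-1)$ indexed by the ordered pairs $(p,q)$ with $p\neq q$; this immediately yields the rank bound $\text{rank}(T(M_\text{off}))\le k(k-1)$. At the same time, Parseval's identity in the orthonormal basis $\{(\lambda_i\otimes\tilde\lambda_j)/\sqrt{d_1d_2}\}$, combined with the vanishing of $\text{Tr}(M_\text{off})$ and of its marginals, collapses to $\Vert T(M_\text{off})\Vert_F^2=d_1d_2\Vert M_\text{off}\Vert_F^2=d_1d_2(1-\sum_j\mu_j^2)$; the bound $\sum_j\mu_j^2\ge 1/k$ (Cauchy--Schwarz on the $k$ Schmidt coefficients summing to one) then gives $\Vert T(M_\text{off})\Vert_F^2\le d_1d_2(k-1)/k$. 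The elementary inequality $\Vert T\Vert_\text{Tr}\le\sqrt{\text{rank}(T)}\,\Vert T\Vert_F$ finally yields $\Vert T(M_\text{off})\Vert_\text{Tr}\le(k-1)\sqrt{d_1d_2}$, which combined with the de~Vicente contribution completes the proof.

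The main obstacle is identifying the rank bound $\text{rank}(T(M_\text{off}))\le k(k-1)$; once that structural observation is in hand, the rest is a direct Cauchy--Schwarz argument together with a standard Parseval identity that becomes particularly simple thanks to the vanishing trace and marginals of $M_\text{off}$.
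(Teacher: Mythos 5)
Your proof is correct, and it takes a genuinely different route from the paper's. Both arguments start the same way, reducing by convexity to a single pure state of Schmidt rank $r\le k$, but from there the paper expands $\ketbra{\psi}{\psi}=\sum_{m,n}\sqrt{q_mq_n}\,\ketbra{a_m}{a_n}\otimes\ketbra{b_m}{b_n}$ and bounds the trace norm \emph{term by term}: each summand is a rank-one matrix $\sqrt{q_mq_n}\,\vec{\alpha}^{(m,n)}\vec{\beta}^{(m,n)\dagger}$ whose trace norm is controlled by the Bloch-vector lengths $\Vert\vec{\alpha}^{(m,n)}\Vert^2=d_1-\delta_{mn}$, after which the off-diagonal sum is maximized at $q_m=1/r$. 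You instead isolate the separable diagonal part (delegated to the known de~Vicente bound, Eq.~(\ref{eq:dvic})) and control the off-diagonal remainder \emph{globally} via $\Vert T\Vert_\text{Tr}\le\sqrt{\operatorname{rank}(T)}\,\Vert T\Vert_F$, with the rank bound $k(k-1)$ coming from the factorization $M_ADM_B^T$ and the Frobenius norm from Parseval together with the vanishing trace and marginals of $M_\text{off}$; the numbers combine to give exactly the same excess $(k-1)\sqrt{d_1d_2}$, and all intermediate steps check out (in particular $\Vert T(M_\text{off})\Vert_F^2=d_1d_2(1-\sum_j\mu_j^2)$ is consistent with the normalization $\trace(\lambda_i\lambda_j^\dagger)=d_1\delta_{ij}$). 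Your version is arguably more conceptual, making explicit that the entire excess over the separable bound is carried by a low-rank, traceless, marginal-free correction, but it uses the de~Vicente criterion as a black box, whereas the paper's term-by-term estimate is self-contained (its diagonal terms re-derive de~Vicente) and transfers directly to the generalized family of criteria in Appendix~\ref{app:dVschmidt}, where the weighted matrix $D_xCD_y$ mixes the correlation block with the marginals and your Parseval step would need to be adapted.
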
\begin{proof}
Let $\rho_k = \sum_i p_i \ketbra{\psi_i}{\psi_i}$ with $\text{SR}(\ket{\psi_i}) \leq k$. Then, $\Vert T \Vert_{\text{Tr}} \leq \sum_i p_i \Vert T(\ketbra{\psi_i}{\psi_i}) \Vert_{\text{Tr}}$.
For some fixed $i$, we write explicitly
\begin{align}
    \ketbra{\psi_i}{\psi_i} = \sum_{m,n=1}^{r} \sqrt{q_mq_n} \ketbra{a_m}{a_n} \otimes \ketbra{b_m}{b_n},
\end{align}
where $r \leq k$.
Each of the objects $\ketbra{a_m}{a_n}$ and $\ketbra{b_m}{b_n}$ can be decomposed in the operator bases $\{\lambda_i\}$, $\{\tilde{\lambda}_i\}$, respectively, such that
\begin{align}
    \ketbra{a_m}{a_n} = \frac{1}{d_1}[\delta_{mn}\one + \sum_{l=1}^{d_1^2-1} \alpha^{(m,n)}_l \lambda_l],\\
    \ketbra{b_m}{b_n} = \frac{1}{d_2}[\delta_{mn}\one + \sum_{l=1}^{d_2^2-1} \beta^{(m,n)}_l \tilde{\lambda_l}].
\end{align}
Note that the fact that the $\ket{a_m}$ and $\ket{b_m}$ are orthonormal fixes the constant in front of the identity. 

Using this decomposition, we get
\begin{align}
\Vert T(\ketbra{\psi_i}{\psi_i}) \Vert_{\text{Tr}} = \Vert \sum_{m,n=1}^r \sqrt{q_mq_n} \vec{\alpha}^{(m,n)} \vec{\beta}^{(m,n)\dagger}\Vert_{\text{Tr}} \nonumber \\
\leq \sum_{m,n=1}^r \sqrt{q_m q_n} \Vert \vec{\alpha}^{(m,n)} \Vert \Vert \vec{\beta}^{(m,n)} \Vert,
\end{align}
where we exploit the usual properties of the trace norm. Finally, we upper bound the Bloch vectors $\vec{\alpha}^{(m,n)}$ and $\vec{\beta}^{(m,n)}$. To that end, we write
\begin{align}
    1 = \trace(\ketbra{a_m}{a_n}\ketbra{a_n}{a_m}) = \frac{1}{d_1}(\delta_{mn} + \Vert \vec{\alpha}^{(m,n)} \Vert^2),
\end{align}
and likewise for $\beta^{(m,n)}$. Therefore,
\begin{align}
    \Vert T(\ketbra{\psi_i}{\psi_i}) \Vert_{\text{Tr}} &\leq \sum_{m,n=1}^r \sqrt{q_m q_n}\sqrt{d_1 - \delta_{mn}}\sqrt{d_2 - \delta_{mn}} \nonumber\\
    &=\sum_{m=1}^r q_m\sqrt{d_1 - 1}\sqrt{d_2-1} +\nonumber \\
    &\phantom{=======}+ 2\sum_{m<n=1}^r\sqrt{q_m q_n}\sqrt{d_1d_2}\nonumber \\
    &\leq \sqrt{d_1 - 1}\sqrt{d_2-1} + \sqrt{d_1d_2}(r-1),
\end{align}
as the maximal value of the sum is attained when $q_m = q_n = \frac1r$. Finally, using $r\leq k$ we obtain the claim.
\end{proof}

\setcounter{thmc}{5}
\begin{corollary}
    Let $\rho_k$ be a bipartite quantum state as before with Bloch decomposition as in Eq.~(\ref{eq:blochdeco}) and $\text{SN}(\rho_k) \leq k$. Let $x,y\geq 0$. Then
\begin{multline}
    \Vert D_x C D_y\Vert_{\text{Tr}} \leq \sqrt{(d_1-1+x^2)(d_2-1+y^2)} + \\ 
    + \sqrt{d_1d_2}(k-1),
\end{multline}
where $D_x = \operatorname{diag}(x,1,\ldots,1)$, $D_y = \operatorname{diag}(y,1,\ldots,1)$ and $C = \begin{pmatrix} 1 & \vec{\beta}^\dagger \\
\vec{\alpha} & T\end{pmatrix}$.
\end{corollary}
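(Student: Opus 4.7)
The plan is to mirror the proof of Lemma~\ref{lem:dVschmidt} but to carry along the extra row/column of the augmented matrix $C$, weighted by $x$ and $y$. First I would reduce to pure states by convexity: since the map $\rho \mapsto C$ is affine (the $00$-entry is always $1$, and the remaining entries are linear in $\rho$), the quantity $\Vert D_x C D_y \Vert_\text{Tr}$ is a convex function of $\rho$. Hence, writing $\rho_k = \sum_i p_i \ketbra{\psi_i}{\psi_i}$ with $\mathrm{SR}(\ket{\psi_i}) \leq k$, it suffices to bound $\Vert D_x C(\ketbra{\psi_i}{\psi_i}) D_y \Vert_\text{Tr}$ for each Schmidt-rank-$r$ pure state, where $r \leq k$.

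Next I would expand $\ketbra{\psi}{\psi} = \sum_{m,n=1}^r \sqrt{q_m q_n}\, \ketbra{a_m}{a_n} \otimes \ketbra{b_m}{b_n}$ and decompose each transition operator in the local operator bases exactly as in the lemma's proof, obtaining coefficients $\alpha^{(m,n)}_l$, $\beta^{(m,n)}_l$ together with the ``zeroth'' coefficient $\delta_{mn}$ sitting in front of the identity. Collecting these into the augmented vectors $\vec{u}^{(m,n)} := (x\delta_{mn}, \alpha^{(m,n)}_1, \ldots)$ and $\vec{v}^{(m,n)} := (y\delta_{mn}, \beta^{(m,n)}_1, \ldots)$, one verifies immediately that
\begin{equation}
D_x C D_y = \sum_{m,n=1}^r \sqrt{q_m q_n}\, \vec{u}^{(m,n)} \bigl(\vec{v}^{(m,n)}\bigr)^\dagger,
\end{equation}
which is the generalization of the factorization used in the original proof. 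Applying the triangle inequality for the trace norm and the bound $\Vert \vec{u}\vec{v}^\dagger \Vert_\text{Tr} = \Vert \vec{u}\Vert\,\Vert \vec{v}\Vert$ gives a sum over $m,n$ of $\sqrt{q_m q_n}\,\Vert \vec{u}^{(m,n)}\Vert\,\Vert \vec{v}^{(m,n)}\Vert$.

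The norms are then computed using the orthonormality of the $\ket{a_m}$ and $\ket{b_m}$: the identity $\trace(\ketbra{a_m}{a_n}\ketbra{a_n}{a_m}) = 1$ still yields $\Vert \vec{\alpha}^{(m,n)}\Vert^2 = d_1 - \delta_{mn}$, so that the presence of the extra $x\delta_{mn}$ (respectively $y\delta_{mn}$) contributes only to the diagonal terms, giving $\Vert \vec{u}^{(m,m)}\Vert^2 = d_1 - 1 + x^2$ and $\Vert \vec{u}^{(m,n)}\Vert^2 = d_1$ for $m\neq n$, and analogously for $\vec{v}$. Splitting the sum into diagonal and off-diagonal parts yields
\begin{equation}
\Vert D_x C D_y \Vert_\text{Tr} \leq \sqrt{(d_1-1+x^2)(d_2-1+y^2)} + \sqrt{d_1 d_2}\bigl[(\textstyle\sum_m \sqrt{q_m})^2 - 1\bigr],
\end{equation}
and the Cauchy--Schwarz bound $(\sum_m \sqrt{q_m})^2 \leq r \leq k$ closes the argument. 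The whole derivation is essentially a bookkeeping exercise once the factorization above is set up; the only subtle point — which I would consider the main ``obstacle'' to articulate cleanly — is ensuring that the $(0,0)$ entry, being fixed at $1$ rather than being a Bloch coefficient, is handled consistently both in the convexity step (where it is an affine, not linear, invariant of $\rho$) and in the norm computation (where it only contributes to the $m=n$ block and introduces the $x^2, y^2$ shifts in exactly the right way).
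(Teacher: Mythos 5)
Your proof is correct and follows essentially the same route as the paper: the paper likewise reduces to Schmidt-rank-$r$ pure states, augments the Bloch vectors by the $\delta_{mn}$ component weighted by $x$ (resp.~$y$), and arrives at exactly your bound $\sum_{m,n}\sqrt{q_mq_n}\sqrt{x^2\delta_{mn}+\Vert\vec{\alpha}^{(m,n)}\Vert^2}\sqrt{y^2\delta_{mn}+\Vert\vec{\beta}^{(m,n)}\Vert^2}$ before splitting into diagonal and off-diagonal parts. Your write-up is in fact more explicit than the paper's (which only states the key inequality and says ``the rest remains the same''), and the factorization $D_xCD_y=\sum_{m,n}\sqrt{q_mq_n}\,\vec{u}^{(m,n)}(\vec{v}^{(m,n)})^\dagger$ together with the affinity/convexity remark is exactly the bookkeeping the paper leaves implicit.
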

\begin{proof}
The proof follows the same lines as the proof of Lemma~\ref{lem:dVschmidt}. Instead of considering the correlation matrix $T$, we calculate the trace norm of $D_xCD_y$ as defined above. In this case, however,
\begin{multline}
\Vert D_xC(\ketbra{\psi_i}{\psi_i})D_y \Vert_{\text{Tr}} \leq \nonumber \\ 
\sum_{m,n=1}^r \sqrt{q_m q_n}  \sqrt{x^2\delta_{mn}+\Vert\vec{\alpha}^{(m,n)} \Vert^2} \sqrt{y^2\delta_{mn}+\Vert \vec{\beta}^{(m,n)} \Vert^2}.
\end{multline}
The rest of the proof remains the same, yielding the claim.
\end{proof}

\section{Equivalence of orthogonal and unitary moments}\label{app:suso}

In this appendix, we are going to construct observables $\mathcal{A}_d$, such that the unitary moments $\mathcal{R}^{(t)}_{\mathcal{A}_d}(\rho)$
coincide with the orthogonal moments $\mathcal{S}^{(t)}(\rho)$
for all $n$-partite quantum states of local dimension $d$ and for as many moments $t$ as possible. The outline of the argument is as follows. First, we describe the $n$-partite case in general, then we show that for $\mathcal{A}_d$ to yield the correct moments in this case, it is necessary that it does so already on a single-particle level. Then we show that for single particles, the spectrum of $\mathcal{A}_d$ is fixed already by the constraints that it should yield same moments for all odd $t$, as well as the even moments $t\leq d$. Next, we show that this fixed spectrum yields these coinciding moments also for $n$-partite states. Finally, we construct a specific simple rank $4$ observable that leads to coinciding second and fourth moment.

\subsection{The moments for $n$-partite states}

For a multipartite quantum state $\rho$ of local dimension $d$, we choose a local, hermitian basis $\{\lambda_0,\ldots,\lambda_{d^2-1}\}$ with $\lambda_0 = \one$, $\trace(\lambda_i\lambda_j) = d\delta_{ij}$ to write
\begin{align}
    \rho = \frac1{d^n} \sum_{i_1, \ldots, i_n=0}^{d^2-1} \alpha_{i_1\ldots i_n} \lambda_{i_1} \otimes \ldots \otimes \lambda_{i_n}.
\end{align}
To start, we define two different moments. First, we choose a local observable $A$ and write the moments of the distribution of measurement results under local unitary rotations, which map
\begin{align}
    \rho\rightarrow \rho_U = U_1\otimes \ldots \otimes U_n \rho U_1^\dagger \otimes \ldots \otimes U_n^\dagger,
\end{align}
as
\begin{align}
    \mathcal{R}_{\mathcal{A}_d}^{(t)}(\rho) &= \int \text{d}U_1\ldots \text{d}U_n \langle A\otimes \ldots \otimes A\rangle_{\rho_U}^t\\
    &= \int \text{d}U_1\ldots \text{d}U_n \langle U_1^\dagger \mathcal{A}_d U_1 \otimes \ldots \otimes U_n^\dagger \mathcal{A}_d U_n\rangle_\rho^t.
\end{align}
Second, we instead rotate the Bloch vectors locally, i.e.,
\begin{align}
    \rho \rightarrow \rho_{O} &= \frac1{d^n}\!\!\!\!\sum_{\substack{i_1,\ldots,i_n\\j_1,\ldots,j_n}}  (O_1)_{i_1,j_1}\ldots(O_n)_{i_n,j_n}\alpha_{j_1\ldots j_n}\lambda_{i_1}\otimes \lambda_{i_n}\\
    &=\frac1{d^n}\!\!\!\!\sum_{i_1,\ldots,i_n} \!\!\alpha_{i_1\ldots i_n} [O_1^T\vec{\lambda}]_{i_1}\otimes \ldots \otimes [O_n^T\vec{\lambda}]_{i_n},
\end{align}
where the $O_i$ are orthogonal matrices of size $d^2$, but the upper left entry is chosen to be $1$ in order to preserve the property of the basis that $\lambda_0 = \one$. Thus, effectively,
\begin{align}
O_i = \begin{pmatrix}1 & \vec{0}^T\\ \vec{0} & \tilde{O}_i
\end{pmatrix}
\end{align}
with $\tilde{O}_i \in O(d^2-1)$. Then, we choose arbitrarily one of the traceless basis elements to measure, e.g., $\lambda_1$, and define the moments w.r.t.~random local Bloch bases as 
\begin{align}
    \mathcal{S}^{(t)}(\rho) &= \int \text{d}\tilde{O}_1\ldots \text{d}\tilde{O}_n\langle \lambda_1 \otimes \ldots \otimes \lambda_1\rangle_{\rho_O}^t\\
    &=\int \text{d}\tilde{O}_1\ldots \text{d}\tilde{O}_n \langle (O_1\vec{\lambda})_1 \otimes \ldots \otimes (O_n\vec{\lambda})_1\rangle_\rho^t\\
    &=\frac{1}{V^n}\int \text{d}\vec{\alpha}_1\ldots \text{d}\vec{\alpha}_n \langle \vec{\alpha_1}\cdot\tilde{\lambda} \otimes \ldots \otimes \vec{\alpha}_n\cdot\tilde{\lambda}\rangle_\rho^t,
\end{align}
where the integration spans over all $d^2-1$-dimensional unit vectors $\vec{\alpha_i}$, $\tilde{\lambda}=(\lambda_1, \ldots, \lambda_{d^2-1})^T$, and $V=\frac{2\sqrt{\pi}^{d^2-1}}{\Gamma[(d^2-1)/2]}$ is the surface of the unit sphere in $d^2-1$ dimensions.

While the orthogonal moments $\mathcal{S}^{(t)}$ do not depend on the specific choice of observable, the unitary moments $\mathcal{R}_\mathcal{A}^{(t)}$ do depend on $\mathcal{A}$. We aim to answer the question: for a fixed local dimension $d$, can one choose $\mathcal{A}$ such that $\mathcal{S}^{(t)}(\rho) = \mathcal{R}_A^{(t)}(\rho)$ for all states $\rho$?

\subsection{Reduction to single qudits}

Before answering the question for $n$-partite states, we consider the following necessary condition: If $\mathcal{S}^{(t)}(\rho) = \mathcal{R}_\mathcal{A}^{(t)}(\rho)$ for all $\rho$, then it will be true for $\rho_1^{\otimes n}$, where $\rho_1$ is some single qudit state, implying $\mathcal{S}^{(t)}(\rho_1) = \mathcal{R}_\mathcal{A}^{(t)}(\rho_1)$ on the single particle level.

In order to find conditions for $\mathcal{A}$ in this case, we first evaluate the orthogonal moments for a general single-qudit state 
\begin{align}
    \rho = \frac1d(\one + \sum_{i=1}^{d^2-1} \alpha_i \lambda_i).
\end{align}
We get
\begin{align}
    \mathcal{S}^{(t)}(\rho)&=\frac1V\int\text{d}\vec{\beta} \langle\vec{\beta}\cdot\tilde{\lambda}\rangle_\rho^{t}\\
    &=\frac1V\int\text{d}\vec{\beta}\left(\sum_{i=1}^{d^2-1} \alpha_i\beta_i\right)^t.
\end{align}
We choose $d^2-1$-dimensional spherical coordinates and set w.l.o.g. $\vec{\alpha} =(0,\ldots,0,a)^T$, such that $\vec{\alpha}\cdot\vec{\beta} = \vert \alpha\vert \cos(\phi_1)$. Thus,
\begin{align}
    \mathcal{S}^{(t)}(\rho)&=\frac{\vert \alpha\vert^t}V\int_0^\pi\text{d}\phi_1\ldots\int_0^\pi\text{d}\phi_{d^2-3}\int_0^{2\pi}\text{d}\phi_{d^2-2} \times \nonumber \\
                           & \phantom{eee}\times \cos(\phi_1)^t\sin(\phi_1)^{d^2-3}\prod_{k=2}^{d^2-3}\sin(\phi_{d^2-1-k})^{k-1} \nonumber\\
    &=\frac{2\pi\vert \alpha\vert^t}{V}\delta_{t,\text{even}}\frac{\Gamma(\frac{d^2-2}{2})\Gamma(\frac{t+1}{2})}{\Gamma(\frac{d^2-1+t}{2})}\prod_{k=2}^{d^2-3}\left(\frac{\sqrt{\pi}\Gamma(\frac k2)}{\Gamma(\frac{k+1}2)}\right)\\
    &=\frac{2\sqrt{\pi}^{d^2-2}\vert \alpha\vert^t}{V}\delta_{t,\text{even}}\frac{\Gamma(\frac{t+1}2)}{\Gamma(\frac{d^2-1+t}2)}\\
    &= \frac{\vert \alpha\vert^t\Gamma(\frac{d^2-1}{2})\Gamma(\frac{t+1}2)}{\sqrt{\pi}\Gamma(\frac{d^2-1+t}{2})}\delta_{t,\text{even}}.
\end{align}
It will later become useful to express the Bloch length $\vert \alpha\vert^t$ in terms of the purity of $\rho$ via $\vert \alpha\vert^2 = d\trace(\rho^2) -1$ for the case of even $t$:
\begin{align}
    \mathcal{S}^{(t)}(\rho) &= \frac{\Gamma(\frac{d^2-1}{2})\Gamma(\frac{t+1}2)}{\sqrt{\pi}\Gamma(\frac{d^2-1+t}{2})}(d\trace(\rho^2)-1)^{t/2}\\
    &=\frac{\Gamma(\frac{d^2-1}{2})\Gamma(\frac{t+1}2)}{\sqrt{\pi}\Gamma(\frac{d^2-1+t}{2})}\!\sum_{j=0}^{t/2} \!\binom{t/2}{j}(-1)^{t/2-j} d^j \trace(\rho^2)^j.\label{eq:Smexpanded}
\end{align}
Now, we turn to the evaluation of $\mathcal{R}_\mathcal{A}^{(t)}(\rho)$. To that end, we first write
\begin{align}
    \mathcal{R}_\mathcal{A}^{(t)}(\rho) &= \int \text{d}U \trace(U\rho U^\dagger \mathcal{A})^t\\
    &= \trace\left(\rho^{\otimes t} \int \text{d}U (U^\dagger \mathcal{A} U)^{\otimes t}\right) \\
    &= \trace(\rho^{\otimes t} \tilde{A}^{(t)}),
\end{align}
where $\tilde{A}^{(t)}$ is the $t$-twirled operator $A$. As it is invariant under unitaries $U^{\otimes t}$, using Schur-Weyl duality it can be written as
\begin{align}
    \tilde{A}^{(t)} = \sum_{\pi \in S_t} a_\pi^{(t)} V_\pi,
\end{align}
where $S_t$ denotes the set of permutations of size $t$, and $V_\pi$ is its unitary representation, i.e., 
\begin{align}
    V_\pi = \sum_{i_1\ldots i_t=0}^{d-1} \ketbra{i_{\pi_1}\ldots i_{\pi_t}}{i_1\ldots i_t}.
\end{align}
The coefficients $a_\pi^{(t)}$ are rather restricted. First, note that
\begin{align}
    \trace(\rho^{\otimes t} \tilde{A}^{(t)})=\sum_\pi a_\pi^{(t)} \trace(V_\pi \rho^{\otimes t}).
\end{align}
The trace with the product of the $\rho$'s will produce products of $\trace(\rho^k)$ according to the cycle type of $\pi$. For example, if $t=4$ and $\pi = (12) \equiv (12)(3)(4)$, meaning that $1$ is mapped to $2$, $2$ is mapped to $1$ and $3$ and $4$ remain unchanged, then $\trace(V_\pi \rho^{\otimes 4}) = \trace(\rho^2)$. As we will set this equal to Eq.~(\ref{eq:Smexpanded}), which contains only terms of $\trace(\rho)$ and $\trace(\rho^2)$, $a_\pi^{(t)} = 0$ whenever $\pi$ contains a cycle of length $3$ or longer. We therefore can write
\begin{align}
    \tilde{A}^{(t)} = a_{()}^{(t)}\one + \sum_{i<j}a_{(ij)}^{(t)}V_{(ij)} + \!\!\!\!\!\!\!\!\sum_{\substack{i<j, k<l, i<k,\\ j\neq l, j \neq k}}\!\!\!\!\!\!\!\!a_{(ij)(kl)}^{(t)} V_{(ij)(kl)} + \ldots.
\end{align}
As $\tilde{A}^{(t)}$ is invariant under exchange of any of the $t$ systems, we get $a_{(ij)}^{(t)} = a_{(kl)}^{(t)}$, etc. Thus, the coefficients only depend on the cycle type, and, as there are only cycles of length 2 (and identities), only on the number of two-cycles:
\begin{align}
    \tilde{A}^{(t)} &= a_{0}^{(t)}\one + a_{1} \sum_{i<j}V_{(ij)} + a_{2}^{(t)}\!\!\!\!\!\!\sum_{\substack{i<j, k<l, i<k,\\ j\neq l, j \neq k}} V_{(ij)(kl)} + \ldots\\
    &=a_{0}^{(t)}\one + \sum_{k=1}^{t/2}a_k^{(t)}\!\!\!\!\!\!\sum_{(i_1,j_1),\ldots,(i_k,j_k)}\!\!\!\!\!\!V_{(i_1,j_1)\ldots(i_k,j_k)},\label{eq:Atdecomp}
\end{align}
where the sum over the two-cycles only runs over proper two-cycles, i.e., no index occurs twice and $i_l < j_l$ for all $l$. In this sum, there are $\binom{t}{2k}(2k-1)!!$ terms.

\subsection{Evaluating the $a_k^{(t)}$}

Let us now try to fix the coefficients $a_k^{(t)}$. To that end, we evaluate using Eq.~(\ref{eq:Atdecomp})
\begin{align}\label{eq:ratintrrho2}
    \mathcal{R}_A^{(t)} &= \trace(\tilde{A}^{(t)} \rho^{\otimes t}) \\
    &= a_0^{(t)}+\sum_{k=1}^{t/2}a_k^{(t)}\binom{t}{2k}(2k-1)!!\trace(\rho^2)^k,
\end{align}
which we set equal to Eq.~(\ref{eq:Smexpanded}) and compare the coefficients. We get
\begin{align}
    a_k^{(t)} &= \frac{(-1)^{t/2-k}d^k\Gamma(\frac{d^2-1}2)(t-1)!!\binom{t/2}{k}}{2^{t/2}\Gamma(\frac{d^2-1+t}2)(2k-1)!!\binom{t}{2k}} \\
    &= (-1)^{t/2-k}d^k\frac{(d^2-3)!!(t-2k-1)!!}{(d^2-3+t)!!}\label{eq:ak}
\end{align}
for $k=0,\ldots,t/2$ (we use the convention $(-1)!! = 1$).
Thus, the operator $\tilde{A}^{(t)}$ is completely fixed, and all that remains is to translate it back into the observable $A$.

\subsection{Evaluating $\trace(\mathcal{A}^t)$}

After all of these preparations, we can now tackle the task of obtaining $\mathcal{A}$. Of course, $\mathcal{A}$ is only fixed up to unitary rotations, which means that we can only hope to fix its eigenvalues. Equivalently, we will fix $\trace(\mathcal{A}^t)$ for all $t$, from which the eigenvalues can be recovered.
Let us start with odd powers, and restart from the expression of the unitary moment, but this time we twirl the state $\rho$, yielding $\tilde{\rho}^{(t)} = \int \text{d}U (U\rho U^\dagger)^{\otimes t} = \sum_{\pi \in S_t} r_\pi V_\pi$. This yields
\begin{align}
    \mathcal{R}_\mathcal{A}^{(t)} = \sum_{\pi \in S_t} r_\pi \trace(V_\pi \mathcal{A}^{\otimes t}),
\end{align}
where $\trace(V_\pi \mathcal{A}^{\otimes t})$ is again a product of traces of powers of $\mathcal{A}$. As $t$ is odd, each of these terms has to contain the trace of an odd power of $\mathcal{A}$, and the permutations corresponding to single cycles of length $t$ all yield $\trace(\mathcal{A}^t)$.

Now, starting with $t=1$, we get directly $0 = \mathcal{R}_\mathcal{A}^{(1)}(\rho) = \trace(\mathcal{A})$.
From $t=3$, we get expressions with $\trace(\mathcal{A})$, which vanish, and a contribution of $\trace(\mathcal{A}^3)$, which therefore has to vanish as well. By induction, we conclude
\begin{align}
    \trace(\mathcal{A}^t) = 0
\end{align}
for $t$ odd. This implies that the eigenvalues of $\mathcal{A}$ must come in pairs $\pm x$.

Now, let us turn to the case of even $t$. In this case, we evaluate
\begin{align}\label{eq:tratfromattv}
    \trace(\tilde{A}^{(t)}V_{(1,2\ldots t)}) &= \int \text{d}U \trace[ (U^\dagger \mathcal{A} U)^{\otimes t} V_{(1,2\ldots t)}] \\
    &= \int \text{d}U \trace[(U^\dagger \mathcal{A} U)^t] = \trace(\mathcal{A}^t).
\end{align}
On the other hand, inserting Eq.~(\ref{eq:Atdecomp}) and using the fact that $\trace(V_\pi) = d^{\# \text{cycles}(\pi)}$, yields
\begin{multline}\label{eq:trattheo}
    \trace(\tilde{A}^{(t)}V_{(1,2\ldots t)}) =\\
    a_0^{(t)} d + \sum_{k=1}^{t/2} a_k^{(t)} \!\!\!\!\!\!\sum_{(i_1,j_1)\ldots(i_k,j_k)}\!\!\!\!\!\! \trace(V_{(i_1,j_1)\ldots(i_k,j_k)}V_{(1,2,\ldots t)}).
\end{multline}

\begin{figure}[t]
    \centering
    \includegraphics[width=1.0\columnwidth]{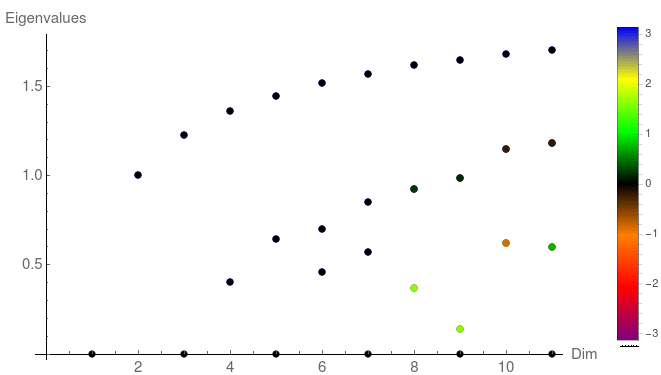}
    \caption{The absolute values of the eigenvalues of the observable $A$ that reproduces orthogonal moments up to order $d$ as a function of the dimension $d$. Note that starting from $d=8$, complex valued eigenvalues are needed. Their complex phases are indicated by the color.}
    \label{fig:tall_evals}
\end{figure}
Now, the difficulty lies in evaluating the trace on the right hand side. In fact, we find
\begin{multline}
    \sum_{(i_1,j_1)\ldots(i_k,j_k)}\!\!\!\!\!\!\trace(V_{(i_1,j_1)\ldots(i_k,j_k)}V_{(1,2,\ldots t)}) = \\
    \binom{t}{2k}\sum_{g=0}^{\lfloor k/2\rfloor}a(k,g) d^{k+1-2g},
\end{multline}
where the numbers $a(n,g)$ denote the number of ways to glue sides of a $2n$-gon so as to produce a surface of genus $g$.
They are given in \url{https://oeis.org/A035309}. We can now insert the expressions for $a_k^{(t)}$ from Eq.~(\ref{eq:ak}) and explicitly evaluate for fixed dimension $d$ $\trace(\mathcal{A}^t)$ for $t\leq d$. Finally, from knowledge of the traces, we can uniquely infer the eigenvalues of $\mathcal{A}$.

\subsection{Examples}

We construct the observables for $d\leq 7$. For $d>8$, we need to allow for complex eigenvalues.

\begin{align}
    \mathcal{A}_2 &= \diag(1,-1),\\
    \mathcal{A}_3 &= \diag(\sqrt{3/2},0,-\sqrt{3/2}) \approx \diag(1.225, 0, -1.225),\\
    \mathcal{A}_4 &= \diag\left(\sqrt{1+2 \sqrt{3/17}},\sqrt{1-2 \sqrt{3/17}}, \right.\nonumber\\
      &\phantom{======}\left.-\sqrt{1-2 \sqrt{3/17}},-\sqrt{1+2 \sqrt{3/17}}\right) \nonumber \\
    &\approx \diag(1.357, 0.400, -0.400, -1.357),\\
    \mathcal{A}_5 &\approx \diag(1.444, 0.644, 0, -0.644, -1.444), \\
    \mathcal{A}_6 &\approx \diag(1.518, 0.695, 0.459, -0.459, -0.695, -1.518),\\
    \mathcal{A}_7 &\approx \diag(1.567, 0.851, 0.567, 0, -0.567, -0.851,  -1.567).
\end{align}
The solutions are displayed in Fig.~\ref{fig:tall_evals}.

While these operators yield matching moments for $t\leq d$, they fail to do so in general for larger $t$, as can be readily checked by comparing $\trace(\mathcal{A}_d^t)$ with Eq.~(\ref{eq:trattheo}) for some $t>d$. Only in case of $d=2$, they match for all $t$, and for $d=3$, they match up to $t=4$. Otherwise, they differ.

If we just want an observable that reproduces moments $t=1,2,3,4$ with as few non-vanishing eigenvalues as possible, we can construct them by setting equal Eqs.~(\ref{eq:ratintrrho2}) and (\ref{eq:Smexpanded}) just for $t\leq4$. This fixes the $a_k^{(t)}$ for these $t$, and via Eq.~(\ref{eq:tratfromattv}) also $\trace(\mathcal{A}^t)$ for $t \leq 4$. This yields an explicit solution for $d\geq 2$:

\begin{align}
    \mathcal{A}_d^{(4)} = \diag( \pm \kappa_1, \pm \kappa_2, 0, \ldots, 0)
\end{align}
with 
\begin{align}
    \kappa_{1,2} = \frac12 \sqrt{d \pm \sqrt{d\left(8-d-\frac{20}{d^2+1}\right)}}.
\end{align}

Note that these solutions in general can only reproduce moments up to order $t=d$, as the unique solutions for the eigenvalues fail to match $\trace(\mathcal{A}^t)$ for $t>d$. However, there are two exceptions. First, for $d=2$, all moments match, as $\trace(\mathcal{A}^t) = 2$ for all even $t$. Second, for $d=3$, also $t=4$ is magically reproduced.
The eigenvalues for these fourth-moment observables are displayed in Fig.~\ref{fig:t24_evals}.

\begin{figure}
    \centering
    \includegraphics[width=1.0\columnwidth]{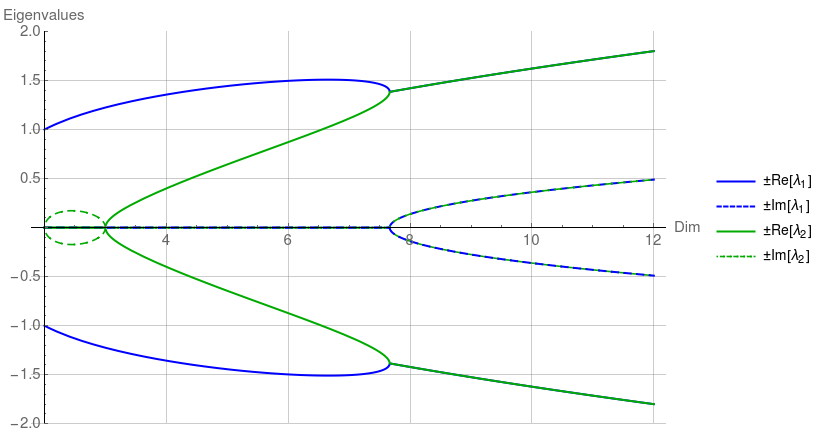}
    \caption{The (up to) four non-vanishing eigenvalues of the observable $A$ that reproduces orthogonal moments up to order four.}
    \label{fig:t24_evals}
\end{figure}

\subsection{Generalization to $n$-partite states}
The last missing step is to argue that the solutions obtained for single particles  also provide solutions to the $n$-partite case. To that end, notice that
\begin{align}\label{eq:Strhon}
\mathcal{S}^{(t)}(\rho_n) = \trace[\rho_n^{\otimes t} \tilde{S}^{(t) \otimes n}],
\end{align}
where $\tilde{S}^{(t)} = \intop \text{d}\vec{\alpha} (\vec{\alpha}\cdot \vec{\lambda})^{\otimes t}$
and, likewise
\begin{align}\label{eq:Rtrhon}
\mathcal{R}^{(t)}_{\mathcal{A}}(\rho_n) = \trace(\rho_n^{\otimes t} \tilde{A}^{(t)\otimes n}).
\end{align}
Now, thanks to the fact that $\text{SU}(d)$ is a proper subgroup of the special orthogonal group $\text{SO}(d^2-1)$, for every unitary rotation $U \in \text{SU}(d)$, there exists an orthogonal rotation $O \in \text{SO}(d^2-1)$, s.t. $U\vec{\alpha}\cdot\vec{\lambda}U^\dagger = (O\vec{\alpha})\cdot\vec{\lambda}$, implying that $\tilde{S}^{(t)}$ is invariant under local unitary transformations $U^{\otimes t}$, just like $\tilde{A}^{(t)}$. Using once more Schur-Weyl duality, this implies that $\tilde{S}^{(t)} = \sum_{\pi \in S_t} s_\pi V_\pi$, and comparing with Eq.~(\ref{eq:Smexpanded}) fixes $\tilde{S}^{(t)} = \tilde{A}^{(t)}$ for all $t$. Thus, Eqs.~(\ref{eq:Rtrhon}) and (\ref{eq:Strhon}) must coincide for all $n$, if they coincide for $n=1$.

\section{Optimization of the moment landscape}\label{app:opti}

In order to find optimized entanglement criteria using the second and fourth moments, we express them in terms of the local unitary invariants $\trace(TT^\dagger )$ and $\trace(TT^\dagger TT^\dagger )$, as
\begin{align}
    \Rt^{(2)} &= \frac{1}{(d_1-1)(d_2-1)}\trace(TT^\dagger ), \\
    \Rt^{(4)} &= \frac{1}{3(d_1-1)^2(d_2-1)^2}\left[2\trace(TT^\dagger TT^\dagger ) + \trace(TT^\dagger )^2\right].
\end{align}
Here, $T$ denotes the $(d_1^2-1) \times (d_2^2-1)$-dimensional correlation matrix of the quantum state as defined in the main text. 

We assume w.l.o.g.~that $d_1 \leq d_2$ and we denote the singular values of $T$ as $(\sigma_1,\ldots, \sigma_{d_1^2-1})$. Then, $\trace(TT^\dagger) = \sum_i \sigma_i^2$ and $\trace(TT^\dagger TT^\dagger) = \sum_i \sigma_i^4$. We now minimize and maximize $\trace(TT^\dagger TT^\dagger)$ for a fixed value of $\trace(TT^\dagger)$ under the constraint dictated by the generalized de~Vicente criterion for states of Schmidt number $k$ from Lemma~\ref{lem:dVschmidt}, namely $\sum_i \sigma_i \leq \sqrt{(d_1-1)(d_2-1)} + \sqrt{d_1d_2}(k-1)$, yielding the optimization problem
\opti{min/max}{\{\sigma_i\}}{\sum_i \sigma_i^4}{\sigma_i \geq 0,;\sum_i \sigma_i^2 = (d_1-1)(d_2-1)\Rt^{(2)},;\sum_i \sigma_i \leq \sqrt{(d_1-1)(d_2-1)} + \sqrt{d_1d_2}(k-1).}
The optimization can be solved analytically using the standard Lagrange multiplier method with slack variables to implement the upper bound. This readily reveals that the lower bound is given by arrangements of singular values that either vanish or are equal, except for one singular value that is allowed to take different values.

For the case of separable states, i.e., $k=1$, we find the analytical expression $f_\text{lb}(\Rt^{(2)}) \leq \Rt^{(4)} \leq f_\text{ub}(\Rt^{(2)})$ with
\begin{widetext}
\begin{align}\label{eq:flb}
      f_{\text{lb}} (\Rt^{(2)}) = \begin{cases}
        \frac{d_1^2+1}{3(d_1^2-1)}(\Rt^{(2)})^2 & 0\leq \Rt^{(2)} \leq \frac1{d^2-1} \\
        \frac13(\Rt^{(2)})^2+\frac{2}{3(d_1^2-m)^4} \left[ \left(\sqrt{v(\Rt^{(2)},m)} -1\right)^4 + \frac{\left(\sqrt{v(\Rt^{(2)},m)}+d_1^2-m-1\right)^4}{(d_1^2-m-1)^3}   \right] &
        \frac1{d_1^2-m} \leq \Rt^{(2)} \leq \frac1{d_1^2-m-1}, \\
        \vdots &  m=1,\ldots,d_1^2-2
      \end{cases}
\end{align}
\end{widetext}
where $v(\Rt^{(2)},m) = (d_1^2-m)(d_1^2-m-1)(\Rt^{(2)} - \frac1{d_1^2-m})$.
The upper bound is in the case of $k=1$ simply given by
\begin{align}\label{eq:fub}
    f_\text{ub}(\Rt^{(2)}) = (\Rt^{(2)})^2.
\end{align}
For $k\geq 2$, the maximal value of $\Rt^{(4)}$ depends on the maximal value that a single singular value can take. To the best of our knowledge, this upper bound is not known, and the best bound that we could find is dictated by the purity bound $\trace(\rho^2) \leq 1$, which translates to a bound on the Frobenius norm of $T$,
\begin{align}
    \Vert T \Vert_\text{Fr} = \sum_{i,j} \vert T_{ij} \vert^2 \leq d_1d_2 + 1 - \frac{d_1+d_2}k,
\end{align}
which can be derived by rewriting $\trace(\rho^2) = \frac1{d_1d_2}( d_1\trace(\rho_A^2) + d_2\trace(\rho_B^2) + \Vert T \Vert_\text{Fr} - 1)$ and maximizing over pure states of Schmidt rank $k$. The upper bound is then given by Eq.~(\ref{eq:fub}) as well, as long as $\Rt^{(2)}\leq 1+\frac{k-1}{k}\frac{d_1+d_2}{(d_1-1)(d_2-1)}$, as values larger than this are incompatible with Schmidt number $k$ states.
However, the upper bound is probably not tight, except for the case of $k=1$.

The resulting picture is displayed for the case of $d_1 = d_2 = 3$ in Fig.~\ref{fig:fig1}.

The lower bound for the whole set of states is traced by the isotropic state, and is given by
\begin{align}
    g_{\text{lb}}(\Rt^{(2)}) = \frac{d_1^2+1}{3(d_1^2-1)}(\Rt^{(2)})^2.
\end{align}

\section{Proof of Theorem~\ref{thm:Thm1}}\label{app:proofthm1}
Here, we prove Theorem~\ref{thm:Thm1} from the main text.
\setcounter{thmc}{2}

\begin{theorem}
Let $\{\psi_i\}_{i=1}^{N}$ be an equiangular set of normalized vectors in $\mathbb{C}^d$, with $\vert\braket{\psi_i|\psi_j}\vert = \frac{1}{\sqrt{d}}$ if $i\neq j$. Then
\begin{align}
    \rho = \frac1N\sum_{i=1}^N \ket{\psi_i}\!\bra{\psi_i} \otimes \ket{\psi_i}\!\bra{\psi_i}
\end{align}
lies on the kink at $\Rt^{(2)}(\rho) = \frac{1}{N}$.
\end{theorem}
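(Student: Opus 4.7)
The approach is to work explicitly with the Bloch representation of each rank-one projector $\ket{\psi_i}\!\bra{\psi_i}$ and exploit the equiangular condition to show that the resulting correlation matrix has a very degenerate singular value spectrum.

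First I would expand each pure state in the chosen hermitian basis as $\ket{\psi_i}\!\bra{\psi_i}=\tfrac{1}{d}[\one+\vec{a}_i\cdot\vec{\lambda}]$ and translate the two defining conditions on the set $\{\psi_i\}$ into statements about the Bloch vectors $\vec{a}_i$. Using the standard identity $|\braket{\psi_i|\psi_j}|^2=\trace(\ketbra{\psi_i}{\psi_i}\ketbra{\psi_j}{\psi_j})=\tfrac{1}{d}[1+\vec{a}_i\cdot\vec{a}_j]$, the equiangularity $|\braket{\psi_i|\psi_j}|^2=1/d$ for $i\neq j$ gives $\vec{a}_i\cdot\vec{a}_j=0$, and the normalization/purity conditions yield $\|\vec{a}_i\|^2=d-1$. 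Thus the $N$ Bloch vectors form an orthogonal system in $\mathbb{R}^{d^2-1}$, each of squared length $d-1$ (which incidentally already forces $N\leq d^2-1$).

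Next, plugging this into $\rho=\tfrac{1}{N}\sum_i \ketbra{\psi_i}{\psi_i}^{\otimes2}$ and expanding, the Bloch decomposition (\ref{eq:blochdeco}) of $\rho$ can be read off directly; in particular the correlation block is
\begin{equation}
T=\frac{1}{N}\sum_{i=1}^{N}\vec{a}_i\,\vec{a}_i^{\,T}.
\end{equation}
Since $T$ is symmetric positive semidefinite and the $\vec{a}_i$ are mutually orthogonal of common length $\sqrt{d-1}$, diagonalizing in the basis $\{\vec{a}_i/\sqrt{d-1}\}$ (completed arbitrarily) shows that $T$ has exactly $N$ nonzero singular values, each equal to $(d-1)/N$, and $d^2-1-N$ zero singular values.

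From this spectrum the two claimed facts follow immediately: the trace norm evaluates to $\|T\|_{\text{Tr}}=N\cdot(d-1)/N=d-1=\sqrt{(d_1-1)(d_2-1)}$, which saturates the de~Vicente bound and matches the kink characterization recalled above Theorem~\ref{thm:Thm1} (all nonzero singular values equal, summing to $\sqrt{(d_1-1)(d_2-1)}$); and $\trace(TT^\dagger)=N\cdot(d-1)^2/N^2=(d-1)^2/N$, so $\Rt^{(2)}(\rho)=\trace(TT^\dagger)/(d-1)^2=1/N$. Separability is automatic since $\rho$ is a convex combination of product states. There is no real obstacle in this argument; the only step requiring care is translating the equiangular overlap condition into orthogonality of the Bloch vectors, as the squared (rather than plain) overlap is what enters cleanly.
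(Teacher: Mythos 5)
Your proof is correct, and it reaches the same spectral conclusion as the paper ($N$ nonzero singular values, all equal to $(d-1)/N$) by a somewhat different route. The paper writes $T_{ij}=\vec{\kappa}_i\cdot\vec{\kappa}_j$ as a Gram matrix of $N$-dimensional vectors indexed by the basis labels, and then verifies the projector identity $T^2=\tfrac{d-1}{N}T$ together with $\trace(T)=d-1$ by inserting the swap-operator identity $\sum_{a\ge1}\lambda_a\otimes\lambda_a=dV-\one\otimes\one$; the equiangularity enters only at that last step through $|\braket{\psi_k|\psi_l}|^2$. You instead decompose $T=\tfrac1N\sum_i\vec{a}_i\vec{a}_i^{\,T}$ as a sum of outer products of the Bloch vectors, translate equiangularity up front into orthogonality $\vec{a}_i\cdot\vec{a}_j=0$ with $\Vert\vec{a}_i\Vert^2=d-1$, and diagonalize explicitly. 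The two computations are dual descriptions of the same matrix $T=MM^T$ versus $M^TM$, and both ultimately rest on the identity $|\braket{\psi_i|\psi_j}|^2=\tfrac1d(1+\vec{a}_i\cdot\vec{a}_j)$ (valid after the harmless w.l.o.g.\ restriction to a hermitian basis, which the paper also makes). Your version is more transparent geometrically and yields the cardinality bound $N\le d^2-1$ as a free by-product; the paper's Gramian/swap-operator formulation has the advantage that it carries over with minimal changes to Corollary~\ref{cor:cor1} and Theorem~\ref{thm:Thm2}, where the relevant Bloch vectors are no longer mutually orthogonal and an explicit diagonalization is not immediately available. You also correctly check both ingredients of the kink characterization (equal nonzero singular values summing to $d-1$, plus manifest separability), so there is no gap.
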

\begin{proof}
As noted before, lying on the kink at $\Rt^{(2)} = \frac1N$ is equivalent to the singular values of the correlation matrix being $(0,\ldots,0,\frac{d-1}N,\ldots,\frac{d-1}N)$, summing up to $d-1$. The elements of the correlation matrix are given by
\begin{align}
    T_{ij} &= \trace(\lambda_i \otimes \lambda_j \rho) \\
           &= \frac1N \sum_{k=1}^{N} \braket{\psi_k|\lambda_i|\psi_k}\braket{\psi_k|\lambda_j|\psi_k} \\
           &= \vec{\kappa}_i \cdot \vec{\kappa}_j,
\end{align}
with the $N$-dimensional vectors $\vec{\kappa}_i = (\braket{\psi_1|\lambda_i|\psi_1}/\sqrt{N},\ldots,\braket{\psi_N|\lambda_i|\psi_N}/\sqrt{N})^T$.
This implies that $T$ is a $d^2-1$-dimensional Gramian matrix of rank $N$, implying that it is positive semidefinite and its eigenvalues coincide with the singular values. Thus, if we show that $T^2 = \frac{d-1}N T$ and $\trace(T)=d-1$, we are done.
We calculate
\begin{align}
    (T^2)_{ij} &= \sum_a T_{ia}T_{aj} \nonumber\\
                &= \frac1{N^2}\sum_{a,k,l} \braket{\psi_k|\lambda_i|\psi_k}\braket{\psi_k|\lambda_a|\psi_k}\braket{\psi_l|\lambda_a|\psi_l}\braket{\psi_l|\lambda_j|\psi_l} \nonumber\\
                & = \frac1{N^2}\sum_{k,l} \braket{\psi_k|\lambda_i|\psi_k}\braket{\psi_l|\lambda_j|\psi_l} \times\nonumber \\ &\phantom{===========}\times\bra{\psi_k}\!\bra{\psi_l}\sum_a \lambda_a \otimes \lambda_a \ket{\psi_k}\!\ket{\psi_l} \nonumber \\
                & = \frac1{N^2}\sum_{k,l} \braket{\psi_k|\lambda_i|\psi_k}\braket{\psi_l|\lambda_j|\psi_l} \times \nonumber \\
               &\phantom{==========}\times\bra{\psi_k}\!\bra{\psi_l}(dV-\one\otimes\one) \ket{\psi_k}\!\ket{\psi_l} \nonumber \\
                &= \frac{1}{N^2}\sum_{k,l} \braket{\psi_k|\lambda_i|\psi_k}\braket{\psi_l|\lambda_j|\psi_l}[(d-1)\delta_{kl} + 1 - 1] \nonumber \\
                &= \frac{d-1}{N}\vec{\kappa}_i \cdot \vec{\kappa}_j = \frac{d-1}{N} T_{ij},
\end{align}
where $V$ denotes the swap operator, given by $V = \sum_{i,j=0}^{d-1} \ketbra{ij}{ji}$, and we use the well known relation $dV = \sum_{a=0}^{d^2-1} \lambda_a \otimes \lambda_a$. The trace can be evaluated analogously.
\end{proof}

\section{Proof of Corollary~\ref{cor:cor1}}\label{app:proofcor1}

Here, we prove Corollary~\ref{cor:cor1} from the main text.

\setcounter{thmc}{3}

\begin{corollary}
Let $\{\psi_i\}_{i=1}^{N} \cup \{\ket{\psi}\}$ be an equiangular set of size $N+1$ of normalized vectors in $\mathbb{C}^d$ with overlap $\frac{1}{\sqrt{d}}$. Then the family of states
\begin{align}
    \rho(p) = \frac{p}N\sum_{i=1}^N \ket{\psi_i}\!\bra{\psi_i} \otimes \ket{\psi_i}\!\bra{\psi_i} + (1\!-\!p) \ket{\psi}\!\bra{\psi} \otimes \ket{\psi}\!\bra{\psi}
\end{align}
traces the boundary of the separable set between the kinks at $\Rt^{(2)}[\rho(N/(N+1))] = \frac{1}{N+1}$ and $\Rt^{(2)}[\rho(1)] = \frac{1}{N}$.
\end{corollary}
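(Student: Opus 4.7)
The plan is to mimic the Gramian argument used in the proof of Theorem~\ref{thm:Thm1} and reduce the corollary to a direct linear-algebraic computation. Separability of $\rho(p)$ is automatic since it is by construction a convex combination of pure product states, so the only real content is to pin down the singular values of its correlation matrix $T(\rho(p))$ and match them with the characterization of the lower boundary given in the main text.

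The key observation is that the entries $T_{ij}(\rho(p))$ can be written as $T = MM^T$ for an $(d^2-1)\times(N+1)$ factor whose $k$-th column carries the Bloch vector of $\ket{\psi_k}$ weighted by $\sqrt{p/N}$ for $k\leq N$, and the Bloch vector of $\ket{\psi}$ weighted by $\sqrt{1-p}$ for $k=N+1$. The non-zero eigenvalues of $T$ then agree with those of the Gram matrix $M^T M$. Applying the identity $\sum_{a=1}^{d^2-1}\lambda_a\otimes\lambda_a = dV - \one\otimes\one$ employed in the proof of Theorem~\ref{thm:Thm1}, together with the equiangular condition $|\braket{\psi_k|\psi_l}|^2 = 1/d$ for any two distinct vectors in the set, each off-diagonal entry of $M^T M$ collapses to $d\cdot(1/d)-1 = 0$. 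Hence $M^T M$ is diagonal, with entries $p(d-1)/N$ (with multiplicity $N$) and $(1-p)(d-1)$ (once).

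It then remains to compare with the boundary characterization. For $p \in [N/(N+1), 1]$ one has $(1-p)(d-1) \leq p(d-1)/N$, so the non-zero singular values of $T$ take the shape $(l,m,\ldots,m)$ with $l \leq m$ and $l+Nm = d-1$, precisely the signature identified in the main text for states lying on the lower-boundary segment between the two kinks. Computing $\Rt^{(2)}(\rho(p)) = p^2/N + (1-p)^2$ shows that it interpolates monotonically between $1/(N+1)$ at $p=N/(N+1)$ and $1/N$ at $p=1$, and plugging the singular values into the piecewise expression for $f_\text{lb}$ in Appendix~\ref{app:opti} verifies that the corresponding $\Rt^{(4)}$ coincides with $f_\text{lb}(\Rt^{(2)}(\rho(p)))$. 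The main technical step is the vanishing of the off-diagonal Gram entries, including the cross-block entries linking $\ket{\psi}$ with the $\ket{\psi_k}$; but because this follows at once from the SWAP identity together with the single overlap value $1/\sqrt{d}$ shared across the full equiangular set, the argument is not substantially harder than that of Theorem~\ref{thm:Thm1} itself.
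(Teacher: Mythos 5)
Your argument is correct, and it reaches the same spectrum of $T$ as the paper does, but by a slightly different and arguably cleaner piece of linear algebra. The paper writes $T$ as a Gramian and then computes $T^2$ directly, finding $T^2 = mT + (l^2-ml)\ketbra{v_l}{v_l}$, from which it reads off $m=(d-1)p/N$, $l=(d-1)(1-p)$ and separately checks that $\ket{v_l}$ is the eigenvector with eigenvalue $l$. You instead factor $T=MM^T$ and pass to the companion Gram matrix $M^TM$ of size $(N+1)\times(N+1)$; the identity $\sum_{a\geq 1}\lambda_a\otimes\lambda_a = dV-\one\otimes\one$ together with $|\braket{\psi_k|\psi_l}|^2=1/d$ makes every off-diagonal entry vanish, so the nonzero eigenvalues of $T$ are read off at once as $p(d-1)/N$ (multiplicity $N$) and $(1-p)(d-1)$ (once). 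This buys you two things: you avoid guessing the rank-one correction and the eigenvector verification, and you make explicit the underlying geometric fact (only implicit in the paper's computation) that the Bloch vectors of an equiangular set with overlap $1/\sqrt d$ are pairwise orthogonal in $\mathbb{R}^{d^2-1}$, which also immediately yields $N+1\leq d^2-1$. The remaining steps — $l\leq m$ iff $p\geq N/(N+1)$, $l+Nm=d-1$, $\Rt^{(2)}=p^2/N+(1-p)^2$ interpolating monotonically between $1/(N+1)$ and $1/N$ — all check out; note only that, like the paper, you ultimately lean on the Appendix-C characterization of the lower-boundary segment as "singular values $(0,\dots,0,l,m,\dots,m)$ with $m$ appearing $N$ times and total sum $d-1$", with the explicit comparison against $f_{\text{lb}}$ serving as a (sufficient but not strictly necessary) double check.
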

\begin{proof}
The proof follows similar lines as the proof of Theorem~\ref{thm:Thm1}. We can write the correlation matrix explicitly as 
\begin{multline}
    T_{ij} = \frac{p}{N}\sum_k \braket{\psi_k|\lambda_i|\psi_k}\braket{\psi_k|\lambda_j|\psi_k} + \\
    + (1-p)\braket{\psi|\lambda_i|\psi}\braket{\psi|\lambda_j|\psi}, 
\end{multline}
which can, again, be written as an inner product, implying that $T$ is a Gramian matrix. As noted before, lying on the boundary of the set of separable states between the kinks at $\Rt^{(2)} = \frac1N$ and $\Rt^{(2)} = \frac1{N+1}$ is equivalent to the singular values of $T$ being given by $(0,\ldots,0,l,m,\ldots,m)$ with $0 \leq l \leq m$ and the singular value $m$ appearing $N$ times. Furthermore, the sum of the singular values must equal $d-1$. 

As the eigenvalues of the Gramian matrix $T$ coincide with its singular values, the claim is therefore equivalent to showing
\begin{align}\label{eq:T2}
    T^2 = mT + (l^2-ml) \ket{v_l}\!\bra{v_l},
\end{align}
where $\ket{v_l}$ is the eigenvector of $T$ with eigenvalue $l$. Similar calculations as before and making use of $\sum_{a=1}^{d^2-1}\lambda_a\otimes \lambda_a = dV-\one$ reveal
\begin{multline}
    (T^2)_{ij} = \sum_{a} T_{ia} T_{aj} = \frac{(d-1)p}{N}T_{ij} + \\
    + (d-1)^2[(1-p)^2 - \frac{p(1-p)}{N}]\braket{\psi|\lambda_i|\psi}\braket{\psi|\lambda_j|\psi}. \label{eq:T2rhs}
\end{multline}
Indeed, the last term in Eq.~(\ref{eq:T2rhs}) can be interpreted as an element of a rank one matrix. Comparing Eq.~(\ref{eq:T2rhs}) with Eq.~(\ref{eq:T2}), we read off
\begin{align}
    m &= \frac{(d-1)p}{N},\\
    l &= (d-1)(1-p), \\
    \ket{v_l} &= \frac{1}{\sqrt{d-1}}(\braket{\psi|\lambda_1|\psi}, \ldots, \braket{\psi|\lambda_{d^2-1}|\psi})^T.
\end{align}
Finally, we have to check two things. First, the sum of the singular values, $l+Nm$, must sum to $d-1$, which is easily confirmed. Second, we have to ensure that $\ket{v_l}$ is the eigenvector of $T$ with eigenvalue $l$. To that end, we insert $T_{ij}$ into
\begin{align}
    \braket{v_l|T|v_l} &= \frac{1}{d-1}\sum_{ij}\braket{\psi|\lambda_i|\psi}\braket{\psi|\lambda_j|\psi}T_{ij} \\
    &= (d-1)(1-p) = l. 
\end{align}
This concludes the proof.
\end{proof}

\section{Proof of Theorem~\ref{thm:Thm2}}\label{app:proofthm2}

Here, we prove Theorem~\ref{thm:Thm2} from the main text.

\setcounter{thmc}{4}

\begin{theorem}
Consider a set of $m$ MUBs  $\{\ket{\psi_j^1}\}_{j=1}^d, \ldots, \{\ket{\psi_j^m}\}_{j=1}^d$. Then the state
\begin{align}
    \rho &:= \frac{1}{md}\sum_{l=1}^m\sum_{k=1}^{d}  \ket{\psi_k^l}\!\bra{\psi_k^l} \otimes  \ket{\psi_k^l}\!\bra{\psi_k^l}
\end{align}
lies at the kink at $\Rt^{(2)} = \frac1{m(d-1)}$.
\end{theorem}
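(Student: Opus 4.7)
The plan is to mirror the strategy used for Theorem~\ref{thm:Thm1}: express the correlation matrix $T$ as a Gramian (which automatically gives $T\geq 0$ with singular values equal to eigenvalues), then show that $T^2$ is proportional to $T$, which forces all non-zero eigenvalues to coincide, and finally fix the value of $\operatorname{Tr}(T)$ to determine how many of them there are. Concretely, a direct computation yields
\begin{align}
T_{ij} &= \operatorname{Tr}(\lambda_i\otimes\lambda_j\,\rho) \nonumber\\
       &= \tfrac{1}{md}\textstyle\sum_{l,k}\braket{\psi_k^l|\lambda_i|\psi_k^l}\braket{\psi_k^l|\lambda_j|\psi_k^l},
\end{align}
so $T_{ij}=\vec\kappa_i\cdot\vec\kappa_j$ for suitable $md$-dimensional real vectors $\vec\kappa_i$, confirming that $T$ is positive semidefinite and that its singular values equal its eigenvalues.

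Next I would square $T$. Writing $(T^2)_{ij}=\sum_a T_{ia}T_{aj}$ and using the standard identity $\sum_{a=1}^{d^2-1}\lambda_a\otimes\lambda_a=dV-\one\otimes\one$ (with $V$ the swap), the internal sum over $a$ collapses to
\begin{equation}
d|\braket{\psi_k^l|\psi_{k'}^{l'}}|^2-1,
\end{equation}
which by the defining MUB property takes the value $d-1$ when $(k,l)=(k',l')$, the value $-1$ when $l=l'$ and $k\neq k'$, and the value $0$ when $l\neq l'$. The ``$d-1$'' piece reproduces $\tfrac{d-1}{md}T_{ij}$ directly. For the ``$-1$'' piece, I would exploit that each MUB is an orthonormal basis, so $\sum_k\ketbra{\psi_k^l}{\psi_k^l}=\one$ and hence $\sum_k\braket{\psi_k^l|\lambda_i|\psi_k^l}=\operatorname{Tr}\lambda_i=0$ for $i\geq 1$. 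This forces $\sum_{k\neq k'}\braket{\psi_k^l|\lambda_i|\psi_k^l}\braket{\psi_{k'}^l|\lambda_j|\psi_{k'}^l}=-\sum_k\braket{\psi_k^l|\lambda_i|\psi_k^l}\braket{\psi_k^l|\lambda_j|\psi_k^l}$, so after the overall minus sign this term contributes $+\tfrac{1}{md}T_{ij}$. Summing the two contributions gives $T^2=\tfrac{1}{m}T$.

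From $T^2=\tfrac{1}{m}T$ it follows that every non-zero eigenvalue of $T$ equals $1/m$. To count them I would compute the trace: using once more $\sum_{a=0}^{d^2-1}\lambda_a\otimes\lambda_a=dV$ applied to $\ket{\psi_k^l}\ket{\psi_k^l}$ gives $\sum_{i=1}^{d^2-1}\braket{\psi_k^l|\lambda_i|\psi_k^l}^2=d-1$, so
\begin{equation}
\operatorname{Tr}(T)=\tfrac{1}{md}\cdot md\cdot(d-1)=d-1.
\end{equation}
Hence there are exactly $m(d-1)$ non-zero singular values, all equal to $1/m$, summing to $d-1$; by the characterisation of the kinks stated in the main text this places $\rho$ at $\Rt^{(2)}=1/[m(d-1)]$, as claimed.

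I expect the main obstacle to be purely organisational: keeping straight the three regimes of the overlap $|\braket{\psi_k^l|\psi_{k'}^{l'}}|^2$ and, in the $l=l'$, $k\neq k'$ case, correctly turning the missing ``diagonal'' $k=k'$ terms into a multiple of $T$ via the tracelessness of the $\lambda_i$. Once this cancellation is executed the identity $T^2=(1/m)T$ drops out, and the rest is a one-line trace computation.
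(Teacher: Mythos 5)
Your proposal is correct and follows essentially the same route as the paper's proof: represent $T$ as a Gramian, use $\sum_{a=1}^{d^2-1}\lambda_a\otimes\lambda_a = dV-\one$ together with the MUB overlap structure to derive $T^2=\tfrac1m T$, and combine with $\operatorname{Tr}(T)=d-1$ to identify the kink at $\Rt^{(2)}=1/[m(d-1)]$. The only (harmless) difference is bookkeeping — you convert the $l=l'$, $k\neq k'$ terms into a multiple of $T$ via tracelessness, whereas the paper kills the corresponding $-\delta_{ll'}$ sum outright since $\sum_{k'}\braket{\psi_{k'}^l|\lambda_j|\psi_{k'}^l}=\operatorname{Tr}\lambda_j=0$ — and you make the trace computation explicit where the paper leaves it implicit.
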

\begin{proof}
We show again that the correlation matrix is a multiple of a projector, i.e., $T^2 = \frac{d-1}{K}T$. To that end, we note that $\vert \braket{\psi_k^l|\psi_{k^\prime}^{l^\prime}}\vert^2 = \delta_{ll^\prime}\delta_{kk^\prime} + (1-\delta_{ll^\prime})\frac1d$ and write explicitly 
\begin{align}
    T_{ij} = \frac1{md}\sum_{l=1}^m\sum_{k=1}^d \braket{\psi_k^l|\lambda_i|\psi_k^l}\braket{\psi_k^l|\lambda_j|\psi_k^l}
\end{align}
and
\begin{widetext}
\begin{align}
    T^2_{ij} &= \frac1{m^2d^2}\sum_{a=1}^{d^2-1}\sum_{l,l^\prime=1}^m\sum_{k,k^\prime=1}^d \braket{\psi_k^l|\lambda_i|\psi_k^l}\braket{\psi_{k^\prime}^{l^\prime}|\lambda_j|\psi_{k^\prime}^{l^\prime}}\braket{\psi_k^l|\lambda_a |\psi_k^l}\braket{\psi_{k^\prime}^{l^\prime}|\lambda_a|\psi_{k^\prime}^{l^\prime}} \\
             &= \frac1{m^2d^2}\sum_{l,l^\prime=1}^m\sum_{k,k^\prime=1}^d \braket{\psi_k^l|\lambda_i|\psi_k^l}\braket{\psi_{k^\prime}^{l^\prime}|\lambda_j|\psi_{k^\prime}^{l^\prime}}\bra{\psi_k^l}\braket{\psi_{k^\prime}^{l^\prime}|dV-\one|\psi_k^l}\ket{\psi_{k^\prime}^{l^\prime}} \\
             & = \frac1{m^2d^2}\sum_{l,l^\prime=1}^m\sum_{k,k^\prime=1}^d \braket{\psi_k^l|\lambda_i|\psi_k^l}\braket{\psi_{k^\prime}^{l^\prime}|\lambda_j|\psi_{k^\prime}^{l^\prime}}(d\delta_{ll^\prime}\delta_{kk^\prime} + (1-\delta_{ll^\prime}) - 1) \\
             & = \frac1{m^2d^2}\sum_{l=1}^m\sum_{k=1}^d [d\braket{\psi_k^l|\lambda_i|\psi_k^l}\braket{\psi_k^l|\lambda_j|\psi_k^l}-\sum_{k^\prime=1}^d \braket{\psi_k^l|\lambda_i|\psi_k^l}\braket{\psi_{k^\prime}^l|\lambda_j|\psi_{k^\prime}^l}] \\
             &=\frac{d}{m^2d^2}\sum_{l=1}^m\sum_{k=1}^d \braket{\psi_k^l|\lambda_i|\psi_k^l}\braket{\psi_k^l|\lambda_j|\psi_k^l} = \frac{1}{m}T_{ij}.
\end{align}
\end{widetext}
Rearranging $\frac1{m}=\frac{d-1}{K}$ yields $K=m(d-1)$.
\end{proof}

\bibliography{cite}

\end{document}